\setlist[itemize]{itemindent=2em, leftmargin=0pt}
\setlist[enumerate]{label = \textup{(\roman*)}}
\setlist[description]{itemindent=2em, leftmargin=0pt}
\newcommand{\lowtilde}[1]{{%
  \mathpalette\low@tilde{#1}%
}}
\newcommand{\low@tilde}[2]{%
  \sbox\z@{$\m@th#1{#2}$}%
  \ht\z@=.9\ht\z@
  {\widetilde{\box\z@}}%
}
\newcommand{\dtilde}[1]{\lowtilde{\lowtilde{#1}}}
\theoremstyle{definition}\newtheorem{Def}{Definition}[section]
\theoremstyle{definition}\newtheorem{Asmp}{Assumption}[section]
\theoremstyle{definition}\newtheorem{Ex}{Example}[section]
\theoremstyle{remark}\newtheorem{Inf}{Inference}
\theoremstyle{remark}\newtheorem{Con}{Conclusion}
\theoremstyle{remark}
\theoremstyle{definition}\newtheorem{Conj}{Conjecture}[section]
\theoremstyle{plain}\newtheorem{Th}{Theorem}[section]
\theoremstyle{plain}\newtheorem{Fact}{Fact}[section]
\theoremstyle{plain}\newtheorem{Le}{Proposition}[section]
\newcommand{\cii}[1]{_{{}_{#1}}}
\newcommand{\df}{\buildrel{}_\mathrm{def}\over=}
\newcommand{\assign}{\mathbin{@}}
\DeclareMathOperator{\mean}{Mean}
\DeclareMathOperator{\med}{Median}
\DeclareMathOperator{\lmd}{lmd}
\DeclareMathOperator{\rhofn}{rho}
\DeclareMathOperator{\logit}{logit}
\DeclareMathOperator{\expit}{expit}
\DeclareMathOperator*{\argmax}{arg\,max}
\author[Alexey V. Osipov]{Alexey V. Osipov$^{\ast,\dagger}$}
\address{$^\ast$St. Petersburg State University, 
St. Petersburg, Russia}
\address{$^\dagger$Syncretis LLC, St. Petersburg, Russia}
\author[Nikolay N. Osipov]{Nikolay N. Osipov$^{\ast,
\ddagger,\mathsection}$}
\address{$^\ddagger$St. Petersburg Department of V.~A.~Steklov Institute of Mathematics of the Russian Academy of Sciences, St. Petersburg, Russia}
\address{$^\mathsection$HSE University, International Laboratory of Game Theory and Decision Making, St. Petersburg, Russia}
\email{nicknick AT pdmi DOT ras DOT ru}
\thanks{
Both authors were supported by the Chebyshev Laboratory under RF Government grant 11.G34.31.0026. The second author was also supported by ERCIM ``Alain Bensoussan'' Fellowship, by the HSE University Basic Research Program, and by the Foundation for the Advancement of Theoretical Physics and Mathematics ``BASIS''. 
A major revision 
was carried out with the support of a grant of the Government of the Russian Federation for the state support of scientific research, carried out under the supervision of leading scientists, agreement 075-15-2021-602}
\title[Interpretable collective intelligence]{From prediction markets to interpretable collective intelligence}
\keywords{wisdom of crowd, consensus probability, play-money prediction market, self-resolving prediction market, non-Bayesian learning, Nash equilibrium,
lottery dependent utility, 
double relativity principle}
\begin{document}
\begin{abstract}
We outline how to create a mechanism that provides an optimal way to elicit, from 
an arbitrary group of experts, the probability of the truth of an arbitrary logical proposition together with collective information that has an explicit form and interprets this probability. Namely, we provide strong arguments for the possibility of the development of a self-resolving prediction market with play money that incentivizes
direct information exchange between experts. Such a system could, in particular, 
motivate simultaneously many experts 
to collectively solve scientific or medical problems in a very efficient manner. We also note that in our considerations, experts are not assumed to be Bayesian.
\end{abstract}

\maketitle
\setcounter{secnumdepth}{2}


\section{Introduction}
\label{subsec:goal} 
Suppose we have a proposition that may be either true or false.  
Our goal is to provide strong arguments for the possibility of the creation of a system that can elicit, from 
an arbitrary group of experts, the probability of the truth of the proposition together with
the relevant collective information in a comprehensible form. This information 
will fully substantiate (interpret) the group probability. More precisely, we outline a way to create a \emph{self-resolving} prediction market with 
\emph{play money} that, in addition, incentivizes \emph{direct} information exchange between experts. 
This simultaneously solves the following three problems:
\begin{itemize}
    \item information exchange in an ordinary prediction market may be broken;
    \item all information aggregated by an ordinary market is encoded in the final price, which is a non-interpretable value between $0$ and $1$;
    \item we can resolve an ordinary market and distribute gains and losses \emph{only} after it becomes publicly known whether the proposition in question is true or false.
\end{itemize}
Since the proposed system overcomes these limitations, it could 
incentivize experts from all over the world to brainstorm scientific or medical problems in 
a very efficient manner.

We say that a prediction market is self-resolving if it distributes final gains and losses relying on its internal state and does not need verifiable ground truth for that. We borrow this term from \cite{AhWi2018,Ah2019}, where similar markets are experimentally studied in another context. Another example of self-resolving markets is described in \cite{DaKiLo2011}. Different approaches to the situation where ground truth is non-verifiable, are developed, for example, in~\cite{Pr2004} or in~\cite{Ba2018}. 
We note that in addition to the admissibility of non-verifiable ground truth, our proposed system could enhance the dynamic mutual learning of experts through price (a core property of prediction markets) into a direct exchange of knowledge of arbitrary nature.

Note that even conventional prediction markets have proved to be useful in science and medicine, for example, for making diagnosis in complex medical cases \cite{MeLoSi2016}, for forecasting the spread of infectious diseases \cite{PoNeNe2007}, or even for solving some general scientific problems \cite{AlKiPf2009}. We could meet the same challenges with our system, while taking full advantage of the self-resolvability and interpretability properties. 
In fact, these properties remove any restrictions on the range of problems that can be addressed with the system.



We could even try the following general approach. 
For any problem, we could put into the system a logical proposition that the problem will be solved after collecting information in the system (surely, we need to describe exactly what we mean by ``will be solved''). 
The system will incentivize experts to share 
reasoned arguments about what obstacles might be in solving the problem, or what might help to overcome known difficulties and to solve it anyway. 
The final result will be an aggregated information concerning 
the problem, together with the probability that it will be solved, given this information.

Later, we will provide some preliminaries on regular prediction markets. If the reader is already familiar with this concept, the following preliminary intuition about the system we propose may be also helpful. 
 \begin{itemize}
     \item We consider a binary prediction market and offer experts to trade as if it were 
     related to a certain proposition~$\bm\omega$ that may be either true or false.
     \item 
     At the same time, we make it public that final gains and losses will actually be distributed
     in accordance with a binary random generator 
     with the probability parameter equal to the final price in the market.
     \item We assume that the final price cannot be distorted by trading actions of any individual expert, but can be changed due to the actions of many experts. We will discuss how to achieve such a property in practice.
     \item One of the possible strategies for an expert is the following. If an expert sees that his or her individual probability of truth of~$\bm\omega$ does not coincide with the current price, then he or she can make
     a transaction as if the market were resolved in accordance with~$\bm\omega$, and after that share his or her information that explains why the current price is not fair with respect to the probability of~$\bm\omega$. 
     The purpose of information sharing is to change others' beliefs on~$\bm\omega$ and, as a consequence, to favorably alter the final price, which will be the random generator parameter. 
     \item It can be seen that such strategies form a natural Nash equilibrium. In particular, this fact leads experts to follow our initial suggestion to behave as if the market were regular and related to~$\bm\omega$. 
 \end{itemize}
In Section~\ref{sec:CI}, we will provide a formal theoretical model that incorporates this intuition, and discuss its implementability.  
We note in advance that our models do not assume that experts are Bayesian, i.e. their beliefs do not need to be conditional probabilities.

\section{Preliminaries}
First, we present our basic notation, which is partially borrowed from~\cite{Wi2009} and is convenient for our non-Bayesian setting.
\begin{itemize}
\item
Suppose there is a proposition that 
may be revealed to be either true or false. 
This means that we have a propositional variable~$\bm\omega$. 
    By $\bm E \df \{1,\dots,N\}$ we denote a population of experts whose collective belief and information about~$\bm\omega$ we would like to obtain. Our object of study is the pair~${(\bm\omega, \bm E)}$.
    \item The notions of the truth and falsity of propositions can be formalized in terms of assignments to propositional variables. Let $A$ be a set of propositional variables. Then the notation $\alpha\assign A$ means that $\alpha$ is 
    an assignment to $A$, i.e. $\alpha$ is a function from $A$ to $\Omega\df\{0,1\}$. An assignment is also called a world. 
    In particular, 
    for the one-element set~$\{\bm\omega\}$ we have two assignments: the world $\bm\omega \mapsto 1$ where $\bm\omega$ is true, and the world $\bm\omega \mapsto 0$ where $\bm\omega$ is false. 
    \item We tell that worlds $\alpha\assign A$ and $\beta\assign B$ are consistent (and write $\alpha\sim\beta$) if 
    they assigns the same values to the variables in $A\cap B$: $\alpha\vert\cii{A\cap B} = \beta\vert\cii{A\cap B}$. 
\end{itemize}

\subsection{Experts' probabilities}\label{subsec:prob}
\begin{Asmp}\label{asmp:exp_prob}
    There exists a finite set $\bm\Gamma$ such that for each expert $n \in \bm E$, his or her probability
    of $\bm\omega \mapsto 1$ is a number $\pi_n(\bm\omega)\in[0,1]$ that 
    is completely determined by a certain subset 
    $\Gamma_n \subseteq \bm\Gamma$, i.e. ${\pi_n(\bm\omega) = \bm\pi(\bm\omega\mid\Gamma_n)}$, 
    where $\bm\pi(\bm\omega\mid\cdot)$ is a fixed set function from
    $2^{\bm\Gamma}$ to $[0,1]$. We call the elements of~$\bm \Gamma$ signals and call each set~$\Gamma_n$
    the knowledge of the expert~$n$. The experts' probabilities of $\bm\omega \mapsto 0$ are determined in the same way by the set function $1 - \bm\pi(\bm\omega\mid\cdot)$. We also assume that 
    ${\bm\Gamma = \bigcup_{n\in \bm E}\, \Gamma_n}$.
\end{Asmp}

In some cases it will be convenient for us to rely on the additional assumption that the elements of $\bm \Gamma$ are propositional variables with certain values already assigned to them: there is collective information $\bm\gamma\cii{\bm{E}}\assign\bm\Gamma$ generated by the real world, and each expert $n\in\bm E$ knows the corresponding part $\gamma_n\assign\Gamma_n$ of it determined by the relation $\gamma_n \sim \bm\gamma\cii{\bm{E}}$ 
(or, what is the same, by the identity $\gamma_n = \bm\gamma\cii{\bm{E}}\vert\cii{\Gamma_n}$). 

We may interpret $\bm\Gamma$ as a set containing at least \emph{every} signal that is 
\begin{itemize}
\item
relevant to~$\bm\omega$, 
\item
is known to at~least one of the experts, 
\item
and is not known to all the experts at once.
\end{itemize}
In other words, there are only irrelevant or public signals outside of~$\bm\Gamma$. This somewhat vague interpretation becomes completely formalized if we consider Assumption~\ref{asmp:exp_prob} as a consequence of basic axioms that can be found in Appendix~\ref{app:A}: see 
Assumptions~\ref{asmp:A0}--\ref{A1_1} together with Proposition~\ref{lem:irr}.

We emphasize that the functional $\bm\pi(\bm\omega\mid\cdot)$ is \emph{not} a conditional probability.
However it may (but does not have to) be generated by a Bayesian model. 
Indeed, let $\bm\Delta \df \{\bm\omega\} \cup \bm\Gamma$. Consider the set $\mathcal{A}$ of all possible 
assignments $\delta\assign \bm\Delta$ as a sample space. 
Let $P$ be a probability measure over $2^\mathcal{A}$.
Any assignment $\delta'\assign\Delta'$ to any subset $\Delta'\subseteq\bm\Delta$ can be associated 
with the event $\{\delta \in \mathcal{A}\mid \delta\sim \delta'\} \in 2^\mathcal{A}$. 
This applies to
the assignment $\bm \omega \mapsto 1$ as well as to any assignment $\gamma' \assign \Gamma'$ for any $\Gamma'\subseteq\bm\Gamma$ 
(in particular, to $\bm\gamma\cii{\bm{E}}\vert\cii{\Gamma'}$ and to $\bm\gamma\cii{\bm{E}}$ itself). 
Relying on such associating and additionally assuming that $P(\bm\gamma\cii{\bm{E}}) \ne 0$, 
we \emph{could} introduce our functional by the formula
\begin{equation}\label{eq:B_prob}
\bm\pi(\bm\omega\mid\Gamma') 
= P\bigl(\bm\omega \mapsto 1 \;\bigm|\; \bm\gamma\cii{\bm{E}}\vert\cii{\Gamma'}\bigr)
\quad\mbox{for any}\quad
\Gamma'\subseteq\bm\Gamma.
\end{equation}

\subsection{Prediction markets}\label{subsec:pr_mr}
Suppose each expert $n\in\bm E$ possesses 
a certain amount of units that are valuable for him or her (real money or some play money). We use the following terminology (see, e.g.,~\cite{Ma2006}).
\begin{Def} 
\emph{A contract} on $\bm\omega\mapsto 1$ is the obligation that can be sold by one expert to another and that binds a seller to pay 
a buyer $1$~unit once it is revealed that $\bm\omega$ is true.
\end{Def}

We can associate with $\bm\omega$ a prediction market where the experts would be trading contracts on 
$\bm\omega\mapsto 1$ with each other. 
The potential ability of the price in a prediction market to aggregate experts' information and to converge to a collective probability of $\bm\omega\mapsto 1$ has been the subject of many empirical and theoretical studies.
Below we provide a brief description of a classical Bayesian model, and many further details can be found in Appendix~\ref{app:pr_mr}. 

But before the formal model, we \emph{in}formally describe an important example that is a slight variation of an example from \cite{PlSu1988} and shows how experts' information and beliefs may vary during the trade even if the initial beliefs are identical. 
Suppose there is an event with three mutually exclusive outcomes $x$, $y$, and $z$. 
Let $\bm\omega$ be the proposition that $x$ has been realized. Suppose a random half of the experts know that $y$ has \emph{not} been realized and the other half know that $z$ has \emph{not} been realized.
All the experts also know that a random half of them know about~$y$ and the other half know about~$z$.
If the experts reckon with the principle of indifference, they will have identical initial 
probabilities $\pi_n^{0}(\bm\omega) = 1/2$, $n\in\bm E$. 
In this situation, any static probability aggregation scheme yields nothing better than $1/2$.
On the other hand, if the experts generate a prediction market and are sophisticated enough to judge about others' probabilities by watching others' behavior in this market, then 
their beliefs will change toward~$1$.  

The discrete model of prediction market from \cite{FeFoPe2005} is one way to formally describe how the price and beliefs cease to fluctuate 
in a single consensus point, in particular in the $xyz$-ex\-am\-ple. 
Here we provide those part of that model that is a direct consequence of the classical results of \cite{NiBrGe1990} that, in its turn, extend the line of research started 
in \cite{Au1976} (see also Appendix~\ref{subsec:pen_res} 
for further discussion of that model). 

First, assume that we are in the Bayesian setting as it is described at the end of Section~\ref{subsec:prob}: there is a probability measure~$P$ over $2^{\mathcal{A}}$, where $\mathcal{A}$ is the set of all assignments to $\bm\Delta = \{\bm\omega\} \cup \bm\Gamma$. 
Additionally assume that the sets $\Gamma_n$ are pairwise disjoint and that each of them consists of a single signal. Of course, in this situation we can use a more conventional notation. 
Namely, following \cite{FeFoPe2005} (see also the survey~\cite{PeSa2007}), we consider the information structure 
$$\Omega\times S = \Omega \times S_1 \times \cdots \times S_N,\quad S_n \df \{0,1\},$$ and bit 
arrays $(\omega,s) = (\omega, s_1,\dots,s_N)\in \Omega\times S$. This is the same as to consider 
a set~$\bm\Delta$ consisting of
$N+1$ propositional variables, with all possible assignments~$\delta \assign \bm\Delta$.
\begin{Asmp}[the model of 
Feigenbaum et al.]
\label{asmp:FFPS}\leavevmode
\begin{enumerate}
\item\label{it:FFPS_first}
A vector $r = (r_1,\dots,r_N)\in S$ is the real state of the world \textup{(}$\bm\gamma\cii{\bm{E}}$ in our notation\textup{)}, and each expert $n\in\bm E = \{1,\dots,N\}$ privy to one bit $r_n \in S_n$ of information. 
\item\label{it:FFPS_second} The experts' probabilities of $\omega = 1$ \textup{(}of $\bm\omega \mapsto 1$ in our notation\textup{)} 
are conditional probabilities generated by a probability measure~$P$ over $2^{\Omega\times S}$ such that $P(s = r) \ne 0$ 
\textup{(}the world~$r$ is \emph{not un}expected\textup{)}.

\item\label{it:FFPS_third} 
There is a prediction market that proceeds in rounds. The clearing price is publicly revealed after each round, and it turns out to be the mean of the experts' beliefs before the round.
\item 
Such means are the only information channel between the experts. 
\item\label{it:FFPS_last} All of the above is common knowledge \textup{(}including the structure $\Omega\times S$ and the measure~$P$\textup{)}.
\end{enumerate}
\end{Asmp}

By $\pi^k_n$ and $q_k^c$ we denote the experts' beliefs before the $k$-th round and the clearing price after it, respectively. 
We can write~\ref{it:FFPS_third} as 
\begin{equation}\label{eq:q_is_mean}
    q_k^c = \mean\big\{\pi_n^k\big\}_{n\in\bm E}.
\end{equation}
We note that~\eqref{eq:q_is_mean} means that the price $q_k^c$ combines information exchange with averaging, which is more than can be said for 
the price in a real prediction market (see Appendix~\ref{sec:model}). 

By $I^k$ we denote the public information accumulated by the market for the first $k-1$ rounds:
$$
	I^1 \df \{s \in S\mid P(\Omega\times\{s\})\ne 0\},
	\quad I^{k+1} \df \{s \in I^{k}\mid q_{k}^c(s) = q_k^c \},
$$
where $q_k^c(s)$ are the prices in a world~$s$ (in particular, $q_k^c=q_k^c(r)$). 
The information of an expert~$n$ before the $k$-th round is $$I^k_n \df \{s \in I^k\mid s_n = {r}_n\},$$ and we have
$
\pi^k_n = P\bigl(\omega=1 \bigm| s \in I^{k}_n\bigr).
$
We note that events $\{(w,s)\in\Omega\times S \mid s \in I^{k}_n\}$ for $k>1$ may be irreducible to the partial assignments $\bm\gamma\cii{\bm{E}}\vert\cii{\Gamma'}$ arising in~\eqref{eq:B_prob}, and, therefore, after the first round the corresponding probabilities $\pi^k_n$ may not always be described within the non-Bayesian context.

Since $I^1 \supseteq I^2 \supseteq \cdots$, there exists a moment~$k_\infty$ when the information stops accumulating: 
\[
I^{k_\infty} = I^{k_\infty + 1} = \cdots.
\]
In \cite{FeFoPe2005}, the authors notice that the results of \cite{NiBrGe1990} immediately imply that at $k_\infty$ 
each expert's belief $\pi^{k_\infty}_n$ turns out to be independent of his or her own private information and
all the beliefs stabilize at one point. 
Namely, we have the following proposition.

\begin{Le}\label{prop:final_state}
For all $n \in \bm E$ we have 
\begin{equation}\label{eq:final_state}
	\pi^{k_\infty}_n 
	= q_{k_\infty}^c = P\big(\omega = 1 \mid s \in I^{k_\infty}\big).
\end{equation}
\end{Le}

Now we can formalize the $xyz$-ex\-am\-ple as follows.
\begin{Ex}\label{ex:xyz}
Suppose $S=S_1\times S_2$ 
and $r= (0,0)$, and set
\begin{align*}
p_x &\df P(\omega = 1,s_1 = 0,s_2 = 0),\\
p_y &\df P(\omega = 0,s_1 = 1,s_2 = 0),\\
p_z &\df P(\omega = 0,s_1 = 0,s_2 = 1). 
\end{align*}
If $p_x = p_y = p_z = \tfrac{1}{3}$, then we have $k_\infty=2$, $I^2 = \{r\}$, and $q_{2}^c = 1$.
\end{Ex}

Finally, we can state the obvious fact that nothing changes when some experts are ignorant.
\begin{Fact}\label{rem:d_less_N}
Consider a situation where the dimension~$d$ of $S$ is strictly less than~$N$: 
$d$~experts privy to $d$ bits of $r\in S$ as in~\textup{\ref{it:FFPS_first}} 
from Assumption~\textup{\ref{asmp:FFPS}}, while remaining $N-d$ experts know no bits at all. We suppose \textup{\ref{it:FFPS_second}--\ref{it:FFPS_last}} from Assumption~\textup{\ref{asmp:FFPS}} remain fulfilled for all $N$ experts: in particular, both~$N$ and~$d$ are assumed to be common knowledge. Under these conditions,
relation~\eqref{eq:final_state} remains valid and we can shorten~\eqref{eq:final_state} by dropping the second identity: it becomes a part of the first.
\end{Fact}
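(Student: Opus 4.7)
The plan is to reduce the modified $N$-expert setting to the original $d$-expert model of Feigenbaum et al.\ and then invoke the theorem of Nielsen--Brandenburger--Geanakoplos that \cite{FeFoPe2005} already relies on. First, for any uninformed expert $n > d$ I would observe that the tautology $\pi_n^k(\bm\omega) = P\bigl(\omega = 1 \bigm| s \in I^k\bigr)$ holds at \emph{every} round~$k$, not only at $k = k_\infty$, since such an expert has no private information to condition on beyond the public pool~$I^k$. In particular, the beliefs of the $N-d$ uninformed experts are functionally determined by~$I^k$ and contribute nothing new to anyone's information.

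Next, using common knowledge of~$N$ and~$d$ (guaranteed by~\ref{it:FFPS_last} of Assumption~\ref{asmp:FFPS}), I would decompose the round-$k$ clearing price as
\[
q_k^c \;=\; \frac{1}{N}\sum_{n=1}^{d}\pi_n^k(\bm\omega) \;+\; \frac{N-d}{N}\,P\bigl(\omega = 1 \bigm| s \in I^k\bigr),
\]
whose second summand is measurable with respect to~$I^k$ and hence publicly known. Consequently, given~$I^k$, the map $s \mapsto q_k^c(s)$ has the same level sets as $s \mapsto \tfrac{1}{d}\sum_{n=1}^{d}\pi_n^k(\bm\omega)(s)$, i.e., the clearing price of the reduced CDA in which only the~$d$ informed experts trade. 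By induction on~$k$ starting from the common initial set~$I^1$, the public-information sequences $\{I^k\}$ therefore coincide in the modified and reduced setups, and the informed experts' beliefs coincide round-by-round in the two setups as well. The original identity \eqref{eq:final_state}, applied to the reduced $d$-expert market, then produces a stabilization time~$k_\infty$ at which the informed experts' beliefs agree on $P\bigl(\omega = 1 \bigm| s \in I^{k_\infty}\bigr)$; since the uninformed experts already carry this value at \emph{every} round, all~$N$ beliefs coincide at~$k_\infty$, and by \eqref{eq:q_is_mean} their common value is the mean~$q_{k_\infty}^c$. This establishes the full chain~\eqref{eq:final_state}.

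To see that the second identity in~\eqref{eq:final_state} may be absorbed into the first, it suffices to note that for an uninformed~$n > d$ the relation $\pi_n^{k_\infty}(\bm\omega) = P\bigl(\omega = 1 \bigm| s \in I^{k_\infty}\bigr)$ is built into the very definition of $\pi_n^{k_\infty}(\bm\omega)$, so the single clause $\pi_n^{k_\infty}(\bm\omega) = q_{k_\infty}^c$ applied to such an~$n$ already delivers $q_{k_\infty}^c = P\bigl(\omega = 1 \bigm| s \in I^{k_\infty}\bigr)$. I expect the only delicate point to be the reduction step: the informed experts must be able to \emph{subtract off} the public contribution of the uninformed ones from the observed clearing price, and this genuinely uses common knowledge of $N$, $d$, and the sigma-algebra generated by~$I^k$ — without the common-knowledge clause, an informed agent would be unable to recover the equivalent reduced-market statistic and the induction linking the $\{I^k\}$ sequences would break down.
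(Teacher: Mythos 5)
Your argument is correct and supplies exactly the reasoning the paper treats as obvious: an expert with no private bit has $I^k_n = I^k$, so his or her belief equals $P\bigl(\omega = 1 \bigm| s \in I^{k}\bigr)$ at every round, which is precisely why the second identity in~\eqref{eq:final_state} is absorbed into the first, and the positive-affine relation between the $N$-expert and $d$-expert clearing prices shows the public-information sequences $\{I^k\}$ (hence the conclusion of~\eqref{eq:final_state}) are unaffected. The paper states the Fact without proof, and your write-up matches its intended justification; the only simplification available is that the result of Nielsen et al.\ invoked in~\cite{FeFoPe2005} already applies verbatim to agents with trivial partitions, so the explicit reduction to the $d$-expert market, while valid, is not strictly needed.
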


\section{Main results} 
\label{sec:CI}
In fact, there is empirical evidence that prediction market prices are indeed able to aggregate information, but may fail to do so if they are the only information channel between the experts: see, for example, the study \cite{AlKiPf2009}, which is also discussed in detail in Appendix~\ref{subsec:emp_evid}. 
The same study shows that direct information exchange could greatly improve the reliability of prediction markets, but does not offer a way to automatically induce such exchange.
In addition to the possibility of broken information exchange through prices, there are the following two problems. 
First, we must reliably know the value of $\bm\omega$ in order to resolve our market and to distribute gains and losses. This severely restricts the applicability of prediction markets. Second, consider an external ignorant observer who does not know the nature of experts' information (in terms of the model discussed in Section~\ref{subsec:pr_mr}, he or she 
does not know the probability space $(\Omega\times S, 2^{\Omega\times S}, P)$). Then he or she sees prices, but cannot obtain the information that generates them: results are not interpretable.
Below we present a theoretical non-Bayesian model that incorporates the idea that all these problems can (and should) be solved simultaneously. 
After that we discuss the possibility of its practical implementation. 

\subsection{Theoretical model of collective intelligence}\label{subsec:sr_theory}
First, we need a binary random generator that takes a parameter $p\in [0,1]$ and returns either~$1$ with probability~$p$ or~$0$ with probability $1-p$. Here $p$ is a probability just in the sense that it is a parameter of our random generator, which can be of a physical or algorithmic nature. Concerning algorithmic randomness, 
    see the survey \cite[Section~2.6]{UsSe1987trans} with references 
    to Knuth, Kolmogorov, Martin-L\"of, von Mises, et~al. 
    Regardless of the implementation details, we formalize the generator as the parametric family $\bm\omega[p]$ of propositional variables which correspond to the following propositions: 
    if we call the generator with parameter~$p$, then it will return~$1$. 
    We also note that there are two assignments over each one-element set $\{\bm\omega[p]\}$.



Now we are ready to present our model and its consequences. We emphasize that what follows is not the desired mechanism itself, but only its simplified model that incorporates the main properties of the system in question. The implementability of the system is discussed in Section~\ref{sec:impl}. 
\begin{Asmp}[a theoretical model of an interpretable self-resolving market]\label{asmp:SRM}\leavevmode
\begin{enumerate}
    \item\label{srm_it1} We are within Assumptions~\textup{\ref{asmp:exp_prob}}, 
    and there is a non-empty set $e_1 \subseteq \bm E$ of experts with only public or irrelevant knowledge: 
    if $n\in e_1$, then
    $$\Gamma_n = \Gamma^1 \df \varnothing.$$
        Initially, we have a prediction market where only the experts from~$e_1$ participate. 
    \item\label{srm_it3} At the beginning of each round $k$, the current participants from $e_{k}$ trade for a period of time and come to a price~$q^c_k$ known to all the experts. 
    \item\label{srm_it4} After that, a new expert $n_k\in\bm E\setminus e_k$ may join to $e_k$ 
    by making a transaction at the price~$q^c_k$: he or she may choose whether to buy or to sell contracts.
    After that, the expert may either
    share \emph{all} his or her private information with the other experts in~$\bm E$, or remain 
    completely silent (see also the discussion below).
    The next round begins with $e_{k+1} = e_k\cup\{n_k\}$, and 
    the public information
    is updated either as $\Gamma^{k+1} = \Gamma^{k} \cup \Gamma_{n_k}$ or as 
    $\Gamma^{k+1} = \Gamma^{k}$. The individual information of 
    each $n\in\bm E$ is determined as
    $\Gamma_n^k \df \Gamma^{k} \cup \Gamma_n$.
    

        
    \item Since $\bm E$ is finite, there exists a final round~$k_\infty$ 
    when no one wishes to join~$e_{k_\infty}$. After~$k_\infty$, we resolve the market \textup{(}distribute gains and losses\textup{)} in accordance with 
    $\bm\omega\big[q^c_{k_\infty}\big]$: each contract brings~$1$~unit in the world 
    ${\bm\omega\big[q^c_{k_\infty}\big]\mapsto 1}$ and nothing in the world 
    $\bm\omega\big[q^c_{k_\infty}\big]\mapsto 0$.
    If in a round~$k$ an expert $n \in \bm E$ believes that\footnote{Hereinafter, the word ``believes'' means ``takes for granted in the further logical inference''. If we would like to be even more formal, we can consider our experts as \emph{a calculus with an input} (see \cite[Section~1.3.1]{UsSe1987trans}).} he or she exactly knows the value that $q^c_{k_\infty}$ will take, then he or she treats this value as the probability of $\bm\omega\big[q^c_{k_\infty}\big]\mapsto 1$. Hereinafter, we denote such a probability as $\pi_n\big(\bm\omega\big[q^c_{k_\infty}\big]\big)$ \textup{(}see~\textup{\ref{srm_it5}} and~\textup{\ref{srm_it6})}.

    \item\label{srm_it5} All the experts are myopic and risk neutral in the following sense.
    \begin{itemize}
        \item Any expert $n\in\bm E$ 
does not consider the situation where someone outside $e_k\cup\{n\}$ participates in the market. 
If someone else joins~$e_k$, this attitude is instantly carried over to $e_{k+1}\cup\{n\}$, and so forth.
        \item 
        If the expert~$n_k$ joins to $e_k$ and has 
        a certain probability $\pi_{n_k}\big(\bm\omega\big[q^c_{k_\infty}\big]\big)$, 
        then
        his or her 
        expected profit on the joining transaction is \emph{strictly} greater than zero: 
        \begin{equation}\label{eq:ex_pr}
           \pm\Bigl(\pi_{n_k}\big(\bm\omega\big[q^c_{k_\infty}\big]\big) - q_k^c\Bigr) s^{\pm} >0.
        \end{equation}
        Here the sign~$\pm$ depends on whether $n_k$ has bought or sold contracts, and $s^\pm$ is 
        their quantity. 
        \item  If there is an expert ${n\in\bm E\setminus e_{k}}$ that can achieve \eqref{eq:ex_pr}, 
        then the round~$k$ cannot be the last.
    \end{itemize}
    
    \item\label{srm_it6} If during 
    stage~\textup{\ref{srm_it3}} of a round~$k$, 
    all the experts $n\in e_k$ 
    have probabilities
    $\pi_n\big(\bm\omega\big[q^c_{k_\infty}\big]\big)$
    and they coincide with each other and do not vary with time, then $q^c_k$ will coincide with them.
\end{enumerate}
\end{Asmp}

We provide arguments for our assumptions to be quite natural and weak.

First, we note that the model does not specify the order in which experts enter the market, but only assumes that trading cannot stop if there is an expert who believes that he or she can make a profitable transaction.

Concerning \ref{srm_it3} and \ref{srm_it6}, the model makes no assumptions on the details of the way the experts trade (in particular, on the matching mechanism), but only the 
assumption that experts whose beliefs are constant and identical to one and the same value, generate the price equal to that value.

Concerning~\ref{srm_it4}, we will see below that the purpose of sharing information is to change others' beliefs and, as a consequence, to profitably change $q^c_{k_\infty}$. In order to achieve this goal in reality, an expert should confirm the truth of the information provided. For example, if the source information is a result of the expert's study, he or she may provide its verified copy: a publication in a peer-reviewed journal. In the above model we identify the expert's information with its verified copy that is transmitted to others. Transmitting unverifiable information is identified with silence. This is a deliberate modeling simplification: here we only show that under certain conditions, fully and accurately disclosing an expert's true information
is at least preferable for him or her to being totally silent. In Appendix~\ref{subsec:sr_reality}, 
we discuss the real situation in its entirety.

Concerning~\ref{srm_it5}, we note that 
the myopia assumption is natural in a context where an expert~$n$ knows nothing about those who are outside $e_k\cup\{n\}$ (even about their existence). 
Certain forms of myopia and risk neutrality are also featured in \cite{FeFoPe2005,PeSa2007}.

\begin{Def}
We say that the market in Assumption~\ref{asmp:SRM} is \emph{efficient} if the following holds for all the rounds $k = 1,\dots,k_\infty$.
\begin{enumerate}[label = \textup{(eff\arabic*)}]
    \item\label{it:eff1} Each expert $n\in e_k$ has shared all his or her information.
    \item\label{it:eff2} All the experts 
    rely in their trading 
    on their mental probabilities of ${\bm\omega\mapsto 1}$. Namely, we have that
    during stage~\ref{srm_it3}, 
        the experts $n\in e_k$ 
        have the probabilities $\pi_n\big(\bm\omega\big[q^c_{k_\infty}\big]\big)$
        and they coincide with the corresponding probabilities 
        ${\bm\pi\bigl(\bm\omega\bigm|\Gamma_{n}^k\bigr)}$.
    
    \item\label{it:eff3} The joining transactions are also regulated by the experts' probabilities 
    of~$\bm\omega\mapsto 1$:
    \begin{itemize}
        \item if there is $n \in \bm E\setminus e_k$ with $\bm\pi\bigl(\bm\omega\bigm|\Gamma_{n}^k\bigr) \ne  q_k^c$, then $k\ne k_\infty$;
        \item an expert $n \in \bm E\setminus e_k$ with $\bm\pi\bigl(\bm\omega\bigm|\Gamma_{n}^k\bigr) =  q_k^c$ does not wish to join $e_k$.
    \end{itemize}
\end{enumerate}
\end{Def}

If the market in question is efficient, then it is in a certain form of a Nash equilibrium. Namely, we have the following result.

\begin{Th}\label{lem:eqlb}
Suppose during all the rounds $k = 1,\dots,k_\infty$, an expert $m \in \bm E$ 
believes 
\begin{enumerate}[label = \textup{(m\arabic*)}]
    \item\label{it:m1} that all the other experts $n\in\bm E\setminus\{m\}$ satisfy~\textup{\ref{it:eff1}} 
    and~\textup{\ref{it:eff2}}; 
    \item\label{it:m3} that he or she cannot resist the concerted actions of others in the sense that the set $e_k\setminus\{m\}$ is not empty and \textup{\ref{srm_it6}} is true even when substituting 
    $e_k\setminus\{m\}$ for $e_k$.
\end{enumerate}

Then the expert~$m$ satisfies~\textup{\ref{it:eff1}}, \textup{\ref{it:eff2}}, and~\textup{\ref{it:eff3}} during all the rounds.
\end{Th}
\begin{proof}
Suppose the expert~$m$ silently joins $e_k$.  Due to~\ref{it:m1}, 
he or she believes 
that during stage~\ref{srm_it3} of the round $k+1$, the probabilities
$\pi_n\big(\bm\omega\big[q^c_{k_\infty}\big]\big)$ 
coincide with $\bm\pi(\bm\omega\mid\Gamma_n^{k+1})$ for $n\in e^{k+1}\setminus \{m\}$, where
$$\Gamma_n^{k+1} = \Gamma_n^k = \Gamma^k.$$
Due to~\ref{it:m3} and the myopia described in the first item of~\ref{srm_it5}, 
the expert~$m$ believes 
$$q^c_{k_\infty} = q^c_{k+1} = \bm\pi\bigl(\bm\omega\bigm|\Gamma_n^{k+1}\bigr) = \bm\pi(\bm\omega\mid\Gamma^{k}) = q_k^c.$$
Therefore, 
we obtain that 
when the expert~$m$ enters the market, he or she has
\begin{equation}\label{eq:sl_cons}
\pi_m\big(\bm\omega\big[q^c_{k_\infty}\big]\big) 
=q^c_{k},
\end{equation}
and thus the expected profit on his or her joining transaction is zero. 
This contradicts the second item of~\ref{srm_it5}, 
and thus $m$ satisfies~\ref{it:eff1}. 

Suppose the expert~$m$ belongs to $e_k$ and is at stage~\ref{srm_it3} of the round~$k$. 
We have just proved that he or she has shared his or her information and believes
$$\Gamma_m^k = \Gamma_n^k = \Gamma^k,\quad n\in e_k.$$ 
The myopia of $m$, together with assumptions~\ref{it:m1} and~\ref{it:m3}, implies that $m$ believes
$$
    q^c_{k_\infty} = q^c_{k} = \bm\pi(\bm\omega\mid\Gamma^{k}) = \bm\pi\bigl(\bm\omega\bigm|\Gamma_m^{k}\bigr).
$$
Namely, he or she has
$$
\pi_m\big(\bm\omega\big[q^c_{k_\infty}\big]\big) = 
\bm\pi\bigl(\bm\omega\bigm|\Gamma_m^{k}\bigr)
$$
and satisfies~\ref{it:eff2}.

If the expert~$m$ satisfies the condition of the first item of \ref{it:eff3} and joins to $e_k$, then he or she will have $\Gamma_m^{k+1} = \Gamma_m^{k}$. 
As we have just proved, the expert~$m$ will have 
$$
    \pi_m\big(\bm\omega\big[q^c_{k_\infty}\big]\big) = 
    \bm\pi\bigl(\bm\omega\bigm|\Gamma_m^{k+1}\bigr) =\bm\pi\bigl(\bm\omega\bigm|\Gamma_m^{k}\bigr) \ne q_k^c.
$$
Then by the third item of~\ref{srm_it5}, 
the round~$k$ cannot be the last.

If the expert~$m$ satisfied the condition of the second item of \ref{it:eff3} and joined to~$e_k$, then by the same reason as above, he or she would have
$
\pi_m\big(\bm\omega\big[q^c_{k_\infty}\big]\big) = q_k^c
$
and we would have a contradiction with the second item of~\ref{srm_it5}. 
\end{proof}

First, we note that item~\ref{it:m3} concerns a robustness property that is fulfilled if, for example, 
the price can be calculated as the median of beliefs. This can be seen as a model of the real situation discussed in Appendix~\ref{subsec:sr_reality}, 
where we
rely (instead of the price) on the so-called market-driven median of experts' beliefs.


Second, we note that it may be expected that the described equilibrium will naturally arise if we publicly offer the experts to trade as though the market will be resolved in accordance with~$\bm\omega$,
and to share their information in order for the system to be able to model this.

Finally, we obviously have the following analogue of~\eqref{eq:final_state}.
\begin{Fact}\label{rem:final_state_sr}
If the market in Assumption~\textup{\ref{asmp:SRM}} is efficient, then 
\begin{equation}\label{eq:final_state_sr}
    \bm\pi\bigl(\bm\omega\bigm|\Gamma_n^{k_\infty}\bigr)
    = q^c_{k_\infty} = \bm\pi(\bm\omega\mid\Gamma^{k_\infty}), \quad n\in\bm E.
\end{equation}
\end{Fact}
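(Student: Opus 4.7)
The plan is to observe that efficiency forces the public information at the end to absorb every private signal, and then to read off the conclusion from items~\ref{srm_it6} and~\ref{it:eff2} in sequence. I would proceed in three steps.

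First, I would show that for every $n \in e_{k_\infty}$ the individual information coincides with the public one, i.e.\ $\Gamma_n^{k_\infty} = \Gamma^{k_\infty}$. By~\ref{it:eff1}, each joining expert $n_j$ shares all of his or her information at round $j$, so the update rule in~\ref{srm_it4} gives $\Gamma^{j+1} = \Gamma^{j} \cup \Gamma_{n_j}$ and hence $\Gamma_{n_j} \subseteq \Gamma^{k_\infty}$. Since the experts of~$e_1$ have $\Gamma_n = \varnothing$ by~\ref{srm_it1}, we obtain $\Gamma_n \subseteq \Gamma^{k_\infty}$ for every $n \in e_{k_\infty}$, and combining this with the formula $\Gamma_n^{k_\infty} = \Gamma^{k_\infty} \cup \Gamma_n$ from~\ref{srm_it4} yields the claimed equality. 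In particular, $\bm\pi(\bm\omega \mid \Gamma_n^{k_\infty}) = \bm\pi(\bm\omega \mid \Gamma^{k_\infty})$ for every $n \in e_{k_\infty}$.

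Second, I would use the first item of~\ref{it:eff2} together with~\ref{srm_it6} to pin down $q^c_{k_\infty}$. During stage~\ref{srm_it3} of round~$k_\infty$, each $n \in e_{k_\infty}$ believes he or she knows $q^c_{k_\infty}$, and Assumption~\ref{asmp:A0_1} identifies the believed value with $\pi_n\bigl(\bm\omega\bigl[q^c_{k_\infty}\bigr]\bigr) = \bm\pi(\bm\omega \mid \Gamma_n^{k_\infty})$. By the first step these believed values all coincide; moreover, since no new expert joins during round~$k_\infty$, neither the public information nor the sets $\Gamma_n^{k_\infty}$ vary in time on this round, so the believed values are also constant. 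Item~\ref{srm_it6} then forces $q^c_{k_\infty} = \bm\pi(\bm\omega \mid \Gamma^{k_\infty})$, which is the right-hand identity in~\eqref{eq:final_state_sr}.

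Third, to extend the conclusion to $n \in \bm E \setminus e_{k_\infty}$, I would invoke the second bullet of~\ref{it:eff2}: if $\bm\pi(\bm\omega \mid \Gamma_n^{k_\infty}) \ne q^c_{k_\infty}$ for some such $n$, then $k_\infty$ cannot be the last round, contradicting its very definition. Hence $\bm\pi(\bm\omega \mid \Gamma_n^{k_\infty}) = q^c_{k_\infty}$ for every $n \in \bm E$, completing~\eqref{eq:final_state_sr}. The only genuinely delicate point is verifying the hypothesis of~\ref{srm_it6} at round~$k_\infty$, i.e.\ that the believed values actually coincide \emph{and} do not vary with time. This is precisely where condition~\ref{it:eff1} does the work: without the sharing assumption the $\Gamma_n^{k_\infty}$ would differ across $n \in e_{k_\infty}$, the believed prices would not match, and~\ref{srm_it6} could not be invoked.
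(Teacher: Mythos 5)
Your proof is correct and fills in exactly the chain of reasoning the paper leaves implicit (the paper states this as an ``obvious'' Fact with no written proof): \ref{it:eff1} plus the update rule in~\ref{srm_it4} and $\Gamma_n=\varnothing$ on $e_1$ give $\Gamma_n^{k_\infty}=\Gamma^{k_\infty}$ on $e_{k_\infty}$, the first bullet of~\ref{it:eff2} with~\ref{srm_it6} pins down $q^c_{k_\infty}=\bm\pi(\bm\omega\mid\Gamma^{k_\infty})$, and the second bullet of~\ref{it:eff2} extends the identity to all of $\bm E$. Nothing is missing, and your closing remark correctly identifies that the coincidence and time-invariance hypotheses of~\ref{srm_it6} are precisely what the sharing condition~\ref{it:eff1} secures.
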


This means that for each $n\in\bm E$, the information 
    $\Gamma_n$ is either a part of $\Gamma^{k_\infty}$, or the expert~$n$ does not understand 
    how $\Gamma_n\setminus\Gamma^{k_\infty}$ may affect $\bm\pi(\bm\omega\mid\Gamma^{k_\infty})$.
    It is also important (!) that the information $\Gamma^{k_\infty}$ has been directly revealed by experts, 
    does not requires computing skills to be elicited, and is fully accessible to any external observer.

    Since the subset $e_1\subseteq \bm E$ of experts with $\Gamma_n =\varnothing$ 
    is not empty, we can make a remark similar to Fact~\ref{rem:d_less_N}.
    \begin{Fact}\label{rem:shorten_fs_sr}
    We can shorten~\eqref{eq:final_state_sr} by dropping the second identity: it is a part of the first.
    \end{Fact}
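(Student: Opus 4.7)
The plan is to show that the second identity in~\eqref{eq:final_state_sr} is an instance of the first, obtained by specializing $n$ to any expert from the initial set $e_1$. Concretely, I would argue as follows.

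First, recall that by item~\ref{srm_it1} of Assumption~\ref{asmp:SRM}, the subset $e_1 \subseteq \bm E$ is non-empty and every $n \in e_1$ satisfies $\Gamma_n = \varnothing$. Since the chain $e_1 \subseteq e_2 \subseteq \cdots \subseteq e_{k_\infty}$ is increasing by item~\ref{srm_it4}, any such $n$ lies in $e_{k_\infty}$. Moreover, under the efficiency assumption, every joining expert shares all of his or her information, so the update rule in~\ref{srm_it4} applies throughout and the individual information at round~$k_\infty$ is
$$\Gamma_n^{k_\infty} = \Gamma^{k_\infty} \cup \Gamma_n.$$
For $n \in e_1$ this collapses to $\Gamma_n^{k_\infty} = \Gamma^{k_\infty}$.

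Second, applying the first identity of~\eqref{eq:final_state_sr} to such an $n$ yields
$$\bm\pi(\bm\omega \mid \Gamma^{k_\infty}) = \bm\pi\bigl(\bm\omega \bigm| \Gamma_n^{k_\infty}\bigr) = q^c_{k_\infty},$$
which is precisely the second identity. Hence one can drop the second identity from~\eqref{eq:final_state_sr} without loss of information, as claimed.

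There is no substantial obstacle here; the only thing to be careful about is to verify that the assumption of efficiency and the non-emptiness of $e_1$ together guarantee the update rule $\Gamma_n^{k_\infty} = \Gamma^{k_\infty} \cup \Gamma_n$ is actually in force, which is immediate from the description of stage~\ref{srm_it4} combined with~\ref{it:eff1}.
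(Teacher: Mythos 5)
Your proposal is correct and matches the paper's own (much terser) justification, which simply remarks that $e_1$ is non-empty and consists of experts with $\Gamma_n=\varnothing$, so that the first identity of~\eqref{eq:final_state_sr} applied to such an expert is exactly the second. Your spelling-out of the update rule $\Gamma_n^{k_\infty}=\Gamma^{k_\infty}\cup\Gamma_n$ under~\ref{it:eff1} is the right way to make that remark precise.
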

    
    In the $xyz$-example, 
    the experts have certain information about each other, and there are no ignorant experts with only common signals. These two facts do not allow us to incorporate this example directly into the setting of Assumption~\ref{asmp:SRM}.
    But making appropriate adjustments, we come to the following situation.
    \begin{Ex}\label{ex:xyz_sr}
        Suppose $x$, $y$, and $z$ are mutually exclusive and exhaustive outcomes 
        and this fact is known to all the experts in $\bm E$. Let $\bm\omega$, $Y$, and $Z$ be propositional variables corresponding respectively to the propositions that $x$, $y$, or $z$ has been realized.
        Suppose $Y$ and $Z$ are false. Formally, we have
        \begin{gather*}
            \bm\Gamma =\bigcup_{n\in\bm E}\Gamma_n = \{Y,Z\},\quad \bm\gamma\cii{\bm{E}}(Y)= \bm\gamma\cii{\bm{E}}(Z) =0,\\
            \bm\pi(\bm\omega\mid\varnothing) = \tfrac{1}{3},\quad  \bm\pi\bigl(\bm\omega\bigm|\{Y\}\bigr) = \bm\pi\bigl(\bm\omega\bigm|\{Z\}\bigr) = \tfrac{1}{2},\quad \bm\pi(\bm\omega\mid\bm\Gamma) = 1.
        \end{gather*}
        We also assume that for any $n\in\bm E$, the set $\Gamma_n$ is $\varnothing$ or $\{Y\}$ or $\{Z\}$. If the mechanism described in Assumption~\ref{asmp:SRM} arises (in particular, 
        $e_1 \ne \varnothing$) and it is efficient, then we will have $k_\infty = 3$, $\Gamma^3= \bm\Gamma$,
        and $q_3^c =1$.
    \end{Ex}

    Indeed, during the first round ignorant experts generate the price $q_1^c = \tfrac{1}{3}$, and
    one of the experts with $\Gamma_n =\{Y\}$ or $\Gamma_n = \{Z\}$ has to enter the market and to share his or her information. Therefore, the second round results in $q_2^c = \tfrac{1}{2}$,  
    there exists an expert $n\in \bm E\setminus e_2$ with $\Gamma_n^2 = \bm\Gamma$, and one of such experts has to enter the market and to share the remaining information. Thus, we have $\Gamma^3= \bm\Gamma$ and the third round ends with $q_3^c =1$.
    

\subsection{Implementability}\label{sec:impl}
Here we very briefly describe how the ideas presented in Section~\ref{subsec:sr_theory} can be implemented in reality. 
In particular, we discuss
how to achieve the robustness property~\ref{it:m3} from Theorem~\ref{lem:eqlb}.
The detailed description can be found in Appendix~\ref{sec:model} (see also Appendix~\ref{app:pr_mr}, which introduces some necessary notation).
\begin{itemize}
    \item First, we can describe, relying on the detailed data collected from sports prediction markets, a data-driven model of how real experts behave in \emph{regular} continuous prediction markets (see details in Appendices~\ref{subsec:data}--\ref{subsec:mod_meets_dat}). We can show that an algorithm based on that model can be applied to the data in order to get a value that effectively predicts the consensus probability. Such an estimate can play the role of a price that cannot be affected by trading actions of an individual expert. The assumptions of that continuous model have been themselves derived from the data, but at the same time they make perfect sense in terms of modern economic theory.
    \item We can present (see Appendix~\ref{subsec:sr_reality}) not fully mathematically rigorous but quite detailed arguments showing that using our data-driven model of a regular prediction market and the corresponding algorithm for robust estimation of consensus probability mentioned above, we can transfer to reality the principles contained in our theoretical model of a self-resolving market. Thus, we can make a strong case that it is possible to create the desired system.    
\end{itemize}

Now we provide some additional insight into the mentioned data-driven model of a regular continuous prediction market and the corresponding algorithm for estimating the consensus probability. 
In order to combine information exchange in a prediction market and averaging of experts' probabilities, we introduce (see Appendices~\ref{subsec:pm}--\ref{subsec:voting}) the notion of the market-driven median $\bm\mu(\bm\omega, \bm E)$ of experts' opinions. For each 
expert $n \in \bm E$, it takes into account the entire history $\pi_n^t$ of how his or her mental probability has varied during his or her trading and learning (accumulation of information) in a continuous prediction market. It also assigns certain weights to all values $\pi_n^t$ in accordance with the trading activity of each expert $n$ from $\bm E$ at each time $t$. In the definition of the mar\-ket-driv\-en median $\bm\mu(\bm\omega,\bm E)$ time $t$ runs a large hypothetical interval $[t_0,t_\infty]$ during which trading can potentially continue. 
However, it turns out (see Appendices~\ref{subsec:data}--\ref{subsec:mod_meets_dat}) that it does not require much time to effectively estimate $\bm\mu(\bm\omega,\bm E)$: under certain model assumptions, we can derive from market data an estimate~$\hat{\bm\mu}_{t}$ that stabilizes much faster than the clearing price~$q^c_t$. We can also show that within these assumptions the data 
are consistent with the hypothesis that~$\hat{\bm\mu}_{t}$, $q^c_t$, and all~$\pi_n^t$ converge to one and the same value as $t\to t_\infty$ (cf.~\eqref{eq:final_state} given Fact~\ref{rem:d_less_N}).

The mentioned model assumptions are themselves derived from the data, and the key one concerns the utility function of the experts in question. It turns out to be non-standard 
and only locally satisfying the classical von Neumann--Morgenstern axioms. Let lotteries be finite probability distributions over $\mathbb{R}$: they assign 
probabilities to experts' possessions. In particular, experts' trading actions lead to various lotteries. Consider an expert with a budget~$b$ and 
consider all the lotteries $L$ with $m_L< b < M_L$ together with the inaction 
lottery $L_b$ with $m_{L_b} = M_{L_b} = b$. Here $m_L$ and $M_L$ are the minimum and maximum possible possessions of the expert due to~$L$. Thus, we exclude from consideration all the lotteries where 
win is possible and loss is impossible or vice versa. 
We introduce the lottery dependent utility function
\begin{equation}\label{eq:drp}
	u(x\mid \lambda, b, L) \df 
	\frac{1-\exp\big({-\lambda\frac{x-m_L}{M_L-m_L}}\big)}
	{1-\exp\big({-\lambda\frac{b-m_L}{M_L-m_L}}\big)},
	\quad \lambda \in \mathbb{R},\quad x\in [m_L, M_L].
\end{equation}
We also pass to the limit and set $u(b\mid \lambda, b, L_b) = 1$.
When the expert relies on this function to compare the lotteries under consideration, we say that he or she is subject to
\emph{the double relativity principle}: the scale of the possible results~$x$ is considered relatively to its minimum~$m_L$ and maximum~$M_L$, and the scale of utility is considered 
relatively to the expert's status quo in the denominator of~\eqref{eq:drp}. 
We  
note that such functions do not fit into the existent theories of lottery dependent utilities (such as \cite{BeSa1987, Co1992, Sch2001}).
We detail how 
function~\eqref{eq:drp}
describes  experts' behavior in 
Appendix~\ref{sec:model} and 
continue to discuss it (together with appropriate axioms) in Appendix~\ref{sec:details}.


\printbibliography

\appendix
\markboth{APPENDICES}{APPENDICES}
\section{Extension of Section~\ref{subsec:prob}}\label{app:A}
Here we detail what we mean by experts' probabilities and information. There are many ways to assign personal probabilities to individuals: 
see the survey \cite{Fi1986} or the book \cite{PaIn2009} with references to de Finetti, Ramsey, Savage, et~al. 
Here we present a construction that is different from these theories and is more suitable for our goal of combining experts' information as well as for separating probabilities from our unconventional utilities~\eqref{eq:drp}. 
In other words, we are inspired by the viewpoint of \cite{Wi2009} rather than the one reflected in \cite{Na2001}. However we do not regard our construction as a philosophical system, but only as a preliminary model convenient for further considerations.

First, we give its brief description. We simply postulate that for any expert $n \in \bm E$ there exists a unique value in $[0,1]$ that is considered in our models 
as the expert~$n$'s probability of $\bm\omega \mapsto 1$. 
We do \emph{not} fix what each expert means by his or her probability: 
this is determined, together with the probability itself, by objective signals the expert has received during the lifetime. 
Namely, we introduce a system of axioms that implies Assumption~\ref{asmp:exp_prob}, i.e. that all these probabilities are values of a functional 
over $2^{\bm\Gamma}$, where $\bm\Gamma$ is a finite set of \emph{objective} signals, each of which is known to one or more experts in $\bm E$.
These signals are units of raw information that have been generated directly by external events in the experts' empirical experience. Below we formalize each signal as a propositional variable with a value already
assigned to it. In fact, we can formalize signals as abstract elements of a set: we need propositional logic notation only to show 
(see Section~\ref{subsec:prob}) 
the
relation between our construction and the  
classical Bayesian model  
of knowledge, which appears in various forms, e.g., in \cite{Co1946,Ja2003, Au1976, FeFoPe2005, PeSa2007, ArBaSm2018} 

In addition to the existence of a set~$\bm\Gamma$ and a functional over $2^{\bm\Gamma}$, our axioms imply that if 
a signal is known to an expert in $\bm E$ and lies outside~$\bm\Gamma$, then it is either known to all the experts in $\bm E$ at once, or irrelevant to $\bm\omega$ at all.

\subsection{Objective beliefs}\label{subsec:obj_bel} 
We fix a current time~$t_0$ and attribute to it all the objects introduced below until Section~\ref{subsec:pm}.

\begin{Asmp}[existence axiom]\label{asmp:A0}
    We assume that for each expert $n\in\bm E$, there exists a unique value ${\pi_n(\bm\omega) \in [0,1]}$ that is known to $n$ and is referred to
    as the expert $n$'s \emph{mental probability} of $\bm\omega \mapsto 1$.
    The expert~$n$'s probability of $\bm\omega \mapsto 0$ exists in the same sense and equals 
    $1-\pi_n(\bm\omega)$.
\end{Asmp}

We also refer to the values $\pi_n(\bm\omega)$ as 
\emph{beliefs}. 
We provide 
Assumptions 
\ref{A1} and~\ref{A1_1}, which 
are partially inspired by \cite{Wi2009} and 
incorporates the idea of the
objectivity of beliefs. 
	Informally, we would like to say that 
	there exists a finite collection of \emph{objectively} true signals such that an expert's mental probability of $\bm\omega\mapsto 1$ is determined by those of them that are known to him or her. 
The informal concept of ``an objectively true signal''
suggests some dualism and independence from an expert. We formalize this concept as a propositional variable with a value already assigned to it: this value can be either $0$ or $1$ and cannot be different for different experts.



\begin{Asmp}[signals objectivity axiom]\label{A1}
With each expert $n\in\bm E$, we can associate 
a finite set $\Gamma_n$ of propositional variables with an assignment $\gamma_n\assign\Gamma_n$. 
We assume that $\gamma_n \sim \gamma_m$ for every $n,m\in \bm E$. 
We set ${\bm\Gamma \df \bigcup_{n\in \bm E}\, \Gamma_n}$ and denote by $\bm\gamma\cii{\bm{E}}\assign\bm\Gamma$ a unique assignment such that 
$\bm\gamma\cii{\bm{E}} \sim \gamma_n$ for all $n\in\bm E$. 
For any $\Gamma' \subseteq \bm\Gamma$, we can recombine any expert $n\in\bm E$ with the knowledge ${\bm\gamma\cii{\bm{E}}\vert\cii{\Gamma'}\assign\Gamma'}$ 
and obtain a new mental probability 
$\pi_n(\bm\omega\mid\Gamma')$. 
Namely, Assumption~\textup{\ref{asmp:A0}} 
is assumed to remain true for the modified expert~$n$.
In particular, we have
$\pi_n(\bm\omega) = \pi_n(\bm\omega\mid\Gamma_n)$. 
\end{Asmp}

Assumption~\ref{A1} does not suggest the uniqueness of objects introduced in it. Suppose we have some~$\dtilde{\bm\Gamma}$ constructed in accordance with Assumption~\ref{A1}. 
Then any $\widetilde{\bm\Gamma}\subseteq\dtilde{\bm\Gamma}$ can be endowed with the structure inherited from~$\dtilde{\bm\Gamma}$: we can set 
$$
\widetilde{\Gamma}_n \df \dtilde{\Gamma}_n\cap\widetilde{\bm\Gamma}\quad\mbox{and}\quad
\widetilde{\gamma}_n \df \dtilde{\gamma}_n\vert\cii{\widetilde{\Gamma}_n}
$$
and define an expert $n$ recombined with $\Gamma'\subseteq \widetilde{\bm\Gamma}$ within $\widetilde{\bm\Gamma}$
as $n$ recombined with $$\Gamma''\df\Bigl(\dtilde{\Gamma}_n\setminus\widetilde{\bm\Gamma}\Bigr)\cup\Gamma'$$ within $\dtilde{\bm\Gamma}$. We have
$
        \widetilde{\pi}_n(\bm\omega \mid \Gamma') 
    =  
    \dtilde{\pi}_n(\bm\omega \mid \Gamma'').
$
For any $\widetilde{\bm\Gamma}$ and $\dtilde{\bm\Gamma}$ connected in such a way, we write $\widetilde{\bm\Gamma}\sqsubseteq\dtilde{\bm\Gamma}$.

\begin{Asmp}[beliefs objectivity axiom]\label{A1_1}
The mechanisms $\pi_n(\bm\omega\mid\cdot)$ in Assumption~\textup{\ref{A1}} may differ from each other only because of a finite number of the experts' signals that are not included in ${\bm\gamma\cii{\bm{E}}\assign\bm\Gamma}$. 
Namely, we can choose~$\bm\Gamma$ with the following property. For any $\widetilde{\bm\Gamma}$ satisfying 
Assumption~\textup{\ref{A1}} and such that $\widetilde{\bm\Gamma} \sqsupseteq \bm\Gamma$, the corresponding mechanisms $\widetilde{\pi}_n(\bm\omega\mid\cdot)$ do not depend on~$n$: ${\widetilde{\pi}_n(\bm\omega\mid\cdot) = \widetilde{\bm\pi}(\bm\omega\mid\cdot)}$. In particular, we have
a single mechanism ${\bm\pi}(\bm\omega\mid\cdot)$ over $\bm\Gamma$ itself.
\end{Asmp}

The variables in $\bm\Gamma$ with assigned values represent very basic objective signals without any derivations. In other words, the knowledge $\bm\gamma\cii{\bm{E}}\assign\bm\Gamma$ is ontological, not epistemological. We have a finite number of signals that binary encode
the experts' empirical experience consisting of outer events. These ``raw'' signals determine the experts' knowledge on $\bm\omega$ as well as how they derive their probabilities from this knowledge.  
In particular, we may assume that the presence or absence of certain ontological signals in an expert's experience determine his or her epistemological language as well as his or her 
type of rationality or irrationality 
(e.g., Bayesian rationality as in \cite{Co1946,Ja2003}, or rationality based on 
the maximum entropy principle as in \cite{Wi2009,Wi2011}, 
or irrationality as in \cite{TvKa1974,KaTv1977}, 
and so forth).
The ontological nature of the signals justifies the assumption that they may be considered separately from the experts. 
Assumption~\ref{A1_1} means that by ``scooping out'' such signals, we can come to a situation where all the experts are coherent and any signal outside~$\bm\Gamma$ is common to them or is irrelevant to $\bm\omega$. 
This statement acquires a precise meaning due to the following proposition.

\begin{Le}\label{lem:irr} If $\dtilde{\bm\Gamma}$ satisfies Assumption~\textup{\ref{A1}} and $\bm\Gamma_0 \sqsubseteq \dtilde{\bm\Gamma}$, then 
the following
statements are equivalent.
\begin{enumerate}[label = \textup{(st\arabic*)}]
    \item \label{st1} 
    For any $\widetilde{\bm\Gamma}$ such that 
    $\bm\Gamma_0\sqsubseteq\widetilde{\bm\Gamma}\sqsubseteq\dtilde{\bm\Gamma}$, the mechanisms
    $\widetilde{\pi}_n(\bm\omega\mid\cdot)$ do not depend on~$n$.
    \item 
    The mechanisms $\dtilde{\pi}_n(\bm\omega\mid\cdot) = \dtilde{\bm\pi}(\bm\omega\mid\cdot)$ do not depend on~$n$, and for any $R\in \dtilde{\bm\Gamma}\setminus \bm\Gamma_0$, we have
        \begin{align}\nonumber
        &\mbox{either}\quad R \in \bigcap_{n\in\bm E}\dtilde{\Gamma}_n &&&\\\label{eq:irr}
        &\mbox{or}\quad \dtilde{\bm\pi}(\bm\omega\mid\Gamma'\cup\{R\})=\dtilde{\bm\pi}(\bm\omega\mid\Gamma') 
        \quad\mbox{for any}\quad \Gamma' \subseteq \dtilde{\bm\Gamma}. &&&
        \end{align}
\end{enumerate}
\end{Le}
\begin{proof}
    Suppose~\ref{st1} is fulfilled and there exists $R\in \dtilde{\bm\Gamma}\setminus \bm\Gamma_0$ such that $R\notin\dtilde{\Gamma}_m$ for some $m\in\bm E$ and 
    \eqref{eq:irr} is not fulfilled for some $\Gamma' \subseteq \dtilde{\bm\Gamma}$.
     We also know that there exists an expert $n \in \bm E$ such that $R\in\dtilde{\Gamma}_n$. If $\widetilde{\bm\Gamma} = \dtilde{\bm\Gamma}\setminus\{R\}$, then 
     $\Gamma' \subseteq \widetilde{\bm\Gamma}$ and $\widetilde{\pi}_m(\bm\omega\mid\Gamma') \ne \widetilde{\pi}_n(\bm\omega\mid\Gamma')$. 
     We come to a contradiction. The proof of the reverse implication is even simpler, and we leave it to the reader.
\end{proof}

Since we assume that $\bm\Gamma$ satisfies Assumptions~\ref{A1} and~\ref{A1_1}, Proposition~\ref{lem:irr} implies that the value of  $\widetilde{\bm\pi}\bigl(\bm\omega\bigm|\widetilde{\bm\Gamma}\bigr)$ is one and the same 
for all $\widetilde{\bm\Gamma}\sqsupseteq \bm\Gamma$ and we arrive at the following definition.

\begin{Def}\label{def:coll_prob}
We call
$
	\bm p(\bm\omega, \bm E) \df \bm\pi(\bm\omega\mid\bm\Gamma)
$
\emph{the collective probability}.
\end{Def}


\section{Extension of Section~\ref{subsec:pr_mr}}\label{app:pr_mr}
\subsection{Bayesian discrete model continued}\label{subsec:pen_res} 
As we have indicated, the first result of Feigenbaum et al.~\cite{FeFoPe2005, PeSa2007} presented in Proposition~\ref{prop:final_state} 
is not their main contribution: it is a direct consequence of the classical study~\cite{NiBrGe1990}. Here we provide the second (and main) result of~\cite{FeFoPe2005}. 
First, following~\cite{FeFoPe2005} we assume that the experts believe they can completely restore~$\omega$ from~${s\in S}$.
\begin{Asmp}[Feigenbaum et~al's model continued]\label{asmp:FFPS_add}
    We have $$P(\omega = g(s))=1,$$ where $g \colon S \to \{0,1\}$ is a given Boolean function. 
\end{Asmp}
    
The main result of \cite{FeFoPe2005} is the following fact that complements Proposition~\ref{prop:final_state} and generalizes Example~\ref{ex:xyz}. 
\begin{Th}\label{lem:FFPS} 
Under additional Assumption~\textup{\ref{asmp:FFPS_add}}, a necessary and sufficient condition on $g$ that guarantees the identity
\begin{equation}\label{eq:q_eq_b}
q_{k_\infty}^c = g(r)
\end{equation} 
for any~$P$ and any $r \in I^1$ is the following. There must exist weights $w_0,\dots, w_N \in \mathbb{R}$ such that
\begin{equation}\label{eq:threshold}
g(s) = 1\quad\mbox{iff}\quad w_0 + \sum_{n=1}^N w_n s_n \ge 1.
\end{equation}
\end{Th}

For our part, we note that the existence of functions not satisfying~\eqref{eq:threshold} is not a problem in itself: the measure~$P$ must, in addition, satisfy rather special conditions in order for identity~\eqref{eq:q_eq_b} to be violated. 
For example, if $\bm E$ consists of two experts and $g$ is the XOR function, 
then we have the following proposition (it would also be desirable to obtain its extension to the general case). 
\begin{Le}\label{rem:cond_of_failure}
Let $S=S_1\times S_2$ 
and $g(s) = s_1 \oplus s_2$. 
If we have
\begin{equation}\label{eq:cond_of_failure}
	P(s_n =1) = \tfrac{1}{2},\quad n=1,2,
\end{equation}
then $k_\infty = 1$. 
Otherwise, we have~\eqref{eq:q_eq_b}. 
\end{Le}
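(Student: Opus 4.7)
The plan is to trace the refinement chain $I^1 \supseteq I^2 \supseteq \cdots \supseteq I^{k_\infty}$ explicitly. Introduce the shorthands $p_{ij} \df P(s_1 = i, s_2 = j)$, $a_i \df P(\omega = 1 \mid s_1 = i)$, and $b_j \df P(\omega = 1 \mid s_2 = j)$. The main structural fact I will use throughout is that once $s \in I^k$, the posterior $\pi_n^k(s)$ depends on $s$ only through $s_n$, because $I^k_n(s) = \{s' \in I^k : s'_n = s_n\}$ does.

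For the first assertion, I would note that \eqref{eq:cond_of_failure} yields two linear relations on $\{p_{ij}\}$ which together force $p_{01} = p_{10}$ and $p_{00} = p_{11}$. A direct substitution then gives $a_0 = a_1$ and $b_0 = b_1$, and since $q_1^c(s) = (a_{s_1} + b_{s_2})/2$ by \eqref{eq:q_is_mean}, the function $s \mapsto q_1^c(s)$ is constant on $S$. Hence $I^2 = I^1$, so $k_\infty = 1$.

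For the second assertion, I would strengthen \eqref{eq:q_eq_b} to the inclusion $I^{k_\infty} \subseteq g^{-1}(g(r))$, which implies \eqref{eq:q_eq_b} at once via \eqref{eq:final_state} because $g$ is $\{0,1\}$-valued. Assume for contradiction that $I^{k_\infty}$ contains a \emph{mixed} pair $s, s'$ with $g(s) \ne g(s')$. Exploiting the symmetries $s_1 \mapsto 1-s_1$, $s_2 \mapsto 1-s_2$, and $(s_1, s_2) \mapsto (s_2, s_1)$, which preserve the model up to relabelling $\bm\omega \leftrightarrow \lnot\bm\omega$ and swapping the two experts, the mixed pair can be sent to $\{(0,0), (0,1)\}$, so I may assume these two points lie in $I^{k_\infty}$. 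Since $\pi_1^{k_\infty}(s)$ depends only on $s_1$, the stability identity $q_{k_\infty}^c(0,0) = q_{k_\infty}^c(0,1)$ collapses to $\pi_2^{k_\infty}(0,0) = \pi_2^{k_\infty}(0,1)$. I would then enumerate the four possibilities for $I^{k_\infty} \cap \{(1,0), (1,1)\}$ and compute $\pi_2^{k_\infty}(0, j) = P(\omega = 1 \mid s_2 = j,\, s \in I^{k_\infty})$ as an explicit ratio of $p_{ij}$'s. In each subcase the equation either forces some $p_{ij} = 0$ with $(i, j) \in I^{k_\infty} \subseteq I^1$ (contradicting the very definition of $I^1$), or, in the single subcase $I^{k_\infty} = S$, the stability of $q_{k_\infty}^c$ across all of $S$ rewrites as $a_0 = a_1$ together with $b_0 = b_1$, which by the algebra of the first assertion is equivalent to \eqref{eq:cond_of_failure} and again contradicts the standing hypothesis.

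The main obstacle is precisely this case analysis: all four subcases for $I^{k_\infty}$ must be checked, and the subcase $I^{k_\infty} = S$ is exactly the one distinguishing the two branches of the statement, so it must be handled carefully to confirm that \eqref{eq:cond_of_failure} is the unique obstruction to \eqref{eq:q_eq_b}. Degenerate situations with $|I^1| < 4$ reduce to the same computations with some $p_{ij}$ vanishing and are ruled out by the same mechanism.
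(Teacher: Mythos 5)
Your proof is correct, and for the harder half of the statement it takes a genuinely different route from the paper. The first assertion is handled the same way in substance: condition~\eqref{eq:cond_of_failure} forces $P_{01}=P_{10}$ and $P_{00}=P_{11}$, hence each expert's round-one belief is independent of his own bit, the price function $q_1^c(\cdot)$ is constant on $I^1$, and $k_\infty=1$. For the second assertion the paper argues \emph{forward}: it sets up the chain of equivalences showing that (under the nondegeneracy condition~\eqref{eq:cond2_xor_proof}) condition~\eqref{eq:cond_of_failure} holds iff $P(g(s)=1\mid s_n=0)=P(g(s)=1\mid s_n=1)$ for $n=1,2$, then asserts that the failure of this informativeness condition yields~\eqref{eq:q_eq_b}, treating the degenerate cases where~\eqref{eq:cond2_xor_proof} fails separately. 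You instead argue \emph{backward} from the terminal configuration: you use the stability $q_{k_\infty}^c(s)\equiv q_{k_\infty}^c$ on $I^{k_\infty}$, reduce a hypothetical mixed pair to $\{(0,0),(0,1)\}$ by the symmetry group, and enumerate the four possible terminal sets, showing each forces either a vanishing $p_{ij}$ for a world inside the support or (when $I^{k_\infty}=S$) exactly the relations $a_0=a_1$, $b_0=b_1$, which under strict positivity of all $p_{ij}$ are equivalent to~\eqref{eq:cond_of_failure}. What your approach buys is that it never needs to trace how $I^1\supseteq I^2\supseteq\cdots$ actually shrinks round by round — the point on which the paper's proof is tersest — and it absorbs the degenerate supports $|I^1|<4$ into the same case analysis rather than treating~\eqref{eq:cond2_xor_proof} as a separate hypothesis; what the paper's equivalence chain buys is a self-contained algebraic characterization of~\eqref{eq:cond_of_failure} in terms of conditional probabilities, which is reusable and makes the "iff" structure of the two branches transparent. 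Both arguments are sound; yours is arguably the more complete one for the implication "not~\eqref{eq:cond_of_failure} $\Rightarrow$~\eqref{eq:q_eq_b}".
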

\begin{proof}
We denote $P_{ij}\df P(s=(i,j))$. 
Under the condition
\begin{equation}\label{eq:cond2_xor_proof}
P_{ij}+P_{i'j'}\ne0\quad\mbox{for}\quad(i,j)\ne(i',j'),   
\end{equation}
we have
\begin{align*}
&
\mbox{(\ref*{eq:cond_of_failure}) holds}
\Leftrightarrow
\begin{cases}
P_{10}=P_{01};\\
P_{00}=P_{11};\\
\end{cases}
\!\!\!\!\Leftrightarrow
\begin{cases}
P_{10}P_{00}=P_{01}P_{11};\\
P_{01}P_{00}=P_{10}P_{11};\\
\end{cases}
\\
&\Leftrightarrow
\begin{cases}
\frac{P_{01}}{P_{01}+P_{00}} = \frac{P_{10}}{P_{10}+P_{11}};\\
\frac{P_{10}}{P_{10}+P_{00}} = \frac{P_{01}}{P_{01}+P_{11}};\\
\end{cases}
\!\!\!\!\Leftrightarrow
\begin{cases}
P(s_2=1 \mid s_1 = 0)=P(s_2=0 \mid s_1=1);\\
P(s_1=0 \mid s_2 = 1)=P(s_1=1 \mid s_2=0);\\
\end{cases}
\\[8pt]
&\Leftrightarrow
\;P(g(s)=1 \mid s_n = 0)=P(g(s)=1 \mid s_n=1),\quad n=1,2.
\end{align*}
This implies that if \eqref{eq:cond_of_failure} and \eqref{eq:cond2_xor_proof} hold, then $k_\infty = 1$, and that 
if \eqref{eq:cond_of_failure} does not hold and \eqref{eq:cond2_xor_proof} holds, then~\eqref{eq:q_eq_b} holds.
If \eqref{eq:cond_of_failure} holds and \eqref{eq:cond2_xor_proof} does not, 
then $P_{10}+P_{01}= 0$ or $P_{11}+P_{00}=0$, because the other four relations in~\eqref{eq:cond2_xor_proof} follow 
from~\eqref{eq:cond_of_failure}, and we have $k_\infty = 1$.
Any violation of \eqref{eq:cond2_xor_proof} implies~\eqref{eq:q_eq_b}, and we have that if \eqref{eq:cond_of_failure} and \eqref{eq:cond2_xor_proof} 
does not hold, then \eqref{eq:q_eq_b} holds.
\end{proof}

\subsection{Continuous prediction markets}\label{subsec:pm}
In a real prediction market, trading goes on continuously in real time and experts' behavior may be complex.

\begin{wraptable}{r}[0pt]{0.35\linewidth}
\centering
\caption{Market state.}\label{PSState}
\begin{tabular}{@{}ccc@{}}
\toprule[1.0pt]
Demand           &          & Supply           \\ \midrule
$S_t^+(q_1)$     & $\ge$    & $0$              \\
$\vdots$         & $\vdots$ & $\vdots$         \\
$S_t^+(q_j)$     & $\ge$    & $0$              \\ \midrule
$S_t^+(q_{j+1})$ & $\ge$    & $S_t^-(q_{j+1})$ \\
$\vdots$         & $\vdots$ & $\vdots$         \\ \midrule
$S_t^+(q_t^-)$   & $>$      & $S_t^-(q_t^-)$   \\ \midrule
$\vdots$         & $=$      & $\vdots$         \\ \midrule
$S_t^+(q_t^+)$   & $<$      & $S_t^-(q_t^+)$   \\ \midrule
$\vdots$         & $\le$    & $\vdots$         \\ \midrule
$0$              & $\le$    & $\vdots$         \\
$\vdots$         & $\vdots$ & $\vdots$         \\
$0$              & $\le$    & $\vdots$         \\ \bottomrule[1.0pt]
\end{tabular}
\end{wraptable}
The following method for elicitation of experts' probabilities goes back to de Finetti~\cite{deFi1937trans} and may give some insight into how prediction markets work in reality.
We ask an expert $n\in\bm{E}$: what is the maximum price of one contract on $\bm\omega\mapsto 1$ at which he or she would agree to buy $S$ contracts? It is easy to see that the expert's answer tends 
to $\pi_n(\bm\omega)$ as $S\to 0$, provided his or her preferences are determined by $\pi_n(\bm\omega)$ and a von Neumann--Morgenstern utility function (see, e.g., the discussion in~\cite{KaWi1988}). On the other hand, the impact of utilities~\eqref{eq:drp} 
does not depend on~$S$ and cannot be eliminated from the answer to de Finetti's question in such a way. 
Nevertheless, in Section~\ref{sec:model} 
we will see how this impact vanishes during trading in a prediction market.

Now we give the following formal definition, which describes a prediction market in reality.
\begin{Def}\label{def:CDA}
\emph{A prediction market} for~$\bm\omega$ 
is a sequence 
of \emph{orders} $o_t = (s, q, n)^{\pm}$ where $t$ 
runs over a finite subset of $[t_0,+\infty)$. Each order means that at time~$t$, an expert $n\in\bm E$ claims that he or she is 
ready to buy (for orders with~$+$) or to sell (for orders with~$-$) $s > 0$ contracts on $\bm\omega\mapsto 1$ 
by a price $q \in (0,1)$.
\end{Def}

We assume that orders are matched with each other in real time by natural rules, i.e. that our prediction market is arranged as 
\emph{a continuous double auction} (CDA). We skip the details of these rules, but note that 
at a moment $t>t_0$ we will see 
a situation shown in Table~\ref{PSState}. Here 
$$0< q_1 < \dots < q_j < q_{j+1}<\dots<q_t^-<\dots<q_t^+ <\dots<1$$ 
are all the prices with non-zero total demand $S_t^+(q)$ or total supply $S_t^-(q)$ of contracts at time~$t$. For any of these prices, the number of concluded transactions and the number of contracts available for immediate selling or buying, are
$$
\min\big(S_t^+(q), S_t^-(q)\big) 
\quad
\mbox{and}
\quad
\big|S_t^+(q)-S_t^-(q)\big|,
$$
respectively.
An expert may withdraw any portion of his or her offer that has not been matched yet.
The bid price~$q_t^-$ is the best price for immediate selling at time~$t$: it is the highest price where demand is greater than supply. 
The ask price~$q_t^+$ is the best price for immediate buying at time~$t$: it is the lowest price where supply is greater than demand.
By $q^c_t$ we denote the current clearing price: the price of the last concluded transaction.

\subsection{Beliefs in a continuous prediction market}
As a basis for further considerations in Section~\ref{sec:model}, we consider a purely hypothetical, but imaginable experiment where all the experts in~$\bm E$ participate in a time-\emph{un}limited prediction market and where they are isolated from external signals 
(i.e., they receive information only from each other, e.g., via the market). 
The isolation of $\bm E$ can be formalized as the constancy of the collective probability in the course of the experiment. 
\begin{Asmp}[continuous prediction market pre-model]\label{asmp:A2}
There is the possibility of an experiment satisfying the following conditions.
\begin{enumerate}
\item\label{A2}
At the moment~$t_0$, all the experts in~$\bm E$ are given the opportunity 
to participate in a CDA-type prediction market for~$\bm\omega$, and they know about that. 
\item\label{A2_1} 
At each moment $t\ge t_0$ of the experiment, we get into Assumptions~\textup{\ref{A1}} and~\textup{\ref{A1_1}}: 
there are raw information $\bm\gamma\cii{\bm{E}}^t\assign\bm\Gamma^t$, the experts' probabilities $\pi_n^t(\bm\omega)$, 
and a single mechanism $\bm\pi^t(\bm\omega\mid\cdot)$
over $\bm\Gamma^t$ 
that generates them. 
We assume that if $t_1<t_2$, then $\Gamma_n^{t_1}\subseteq\Gamma_n^{t_2}$
and $\bm\pi^{t_1}(\bm\omega\mid\Gamma') = \bm\pi^{t_2}(\bm\omega\mid\Gamma')$ for any $n\in\bm E$ and any $\Gamma'\subseteq\bm\Gamma^{t_1}$.
\item\label{A2_2} 
The collective probabilities do not vary throughout the experiment: 
$$\bm\pi^{t}(\bm\omega\mid\bm\Gamma^t) = \bm\pi^{t_0}(\bm\omega\mid\bm\Gamma^{t_0}) = \bm p(\bm\omega,\bm E).$$
\item\label{A2_3} 
There is a moment~$t_\infty > t_0$ when the last transaction is concluded: no orders are matched 
with each other during $(t_\infty, t_\infty+\Delta]$, where 
$\Delta$ is a sufficiently large pre-announced timeout. We complete the experiment at $t_\infty+\Delta$ and, after that, distribute awards once 
we know the value of~$\bm\omega$.
\end{enumerate}
\end{Asmp}

We refer to the market from Assumption~\ref{asmp:A2} as \emph{the unlimited market}. 

\begin{Def}\label{qDef} We call the price $\bm q(\bm\omega,\bm E) \df q^{c}_{t_\infty}$
of the last concluded transaction in the unlimited market 
\emph{the ultimate price}.
\end{Def}

\subsection{Trading as weighted voting}\label{subsec:voting} It is worth noting that here we only introduce a certain convenient terminology and do not establish 
any relations with the voting theory, which is a part of the social choice theory. 

Consider the total risks 
$$
V^+_t(q) \df qS^+_t(q)\quad\mbox{and}\quad V^-_t(q) \df (1-q)S^-_t(q) 
$$
for prices $q\in (0,1)$ and times $t>t_0$ (see Table~\ref{PSState}). We call units constituting these quantities 
\emph{votes for} $\bm\omega\mapsto 1$ and \emph{votes for} $\bm\omega\mapsto 0$, respectively. 
Due to discrete nature of any operations with real or play money in any real or imaginable market, we can normalize $V^+_t(q)$ and $V^-_t(q)$ in such a way that 
they
will become integers and each vote will correspond to a single expert who has placed it.
Let $\mathcal{V}_t$ be the set of size $\sum_{q \in (0,1)} \big(V^+_{t}(q)+V^-_{t}(q)\big)$ of all the votes at time~$t$.
We note that since the experts may withdraw non-matched votes, 
it is possible that $\mathcal{V}_{t_1} \nsubseteq \mathcal{V}_{t_2}$ for $t_1 < t_2$.
Suppose we are in the setting of the unlimited market from Assumption~\ref{asmp:A2}. Let $t\in[t_0, t_\infty]$. 
We can assign to each vote $v \in\mathcal{V}_{t}$ the belief $\pi_v = \pi_n^{\tau}(\bm\omega)$ with which 
the corresponding expert $n\in\bm E$ has placed it at time~$\tau\in [t_0,t]$.
\begin{Def}\label{muDef}
We call
$
	\bm\mu(\bm\omega, \bm E) \df \med\{\pi_v\}_{v \in\mathcal{V}_{t_\infty}}
$
\emph{the market-driven median} of beliefs.
\end{Def}

The market-driven median weighs the experts' beliefs and takes into account information exchange between them.
We can demonstrate that 
it is a better choice than the mean 
of~$\pi_v$. Suppose there is an expert who has understood that $\bm\omega$ is true and, therefore, is ready to spend his or her entire budget on contracts on the truth of~$\bm\omega$, buying them at any price $q < 1$. If he or she buys contracts at a price $q>0.5$ from another expert, then such a transaction will generate $V^+$ votes associated with the probability $\pi_1 = 1$ and $V^- < V^+$ votes associated with the probability $\pi_2 < 1$. Therefore, the median of 
the corresponding probabilities $\pi_v$ will be equal to $1$, while the mean will not.

\subsection{Consensus conjecture}
Here we formulate a conjecture that is a continuous counterpart of both
\begin{itemize}
\item
relation~\eqref{eq:final_state}, which can be reduced to its first identity (see Fact~\ref{rem:d_less_N}),  
\item
identity~\eqref{eq:q_eq_b} in Theorem~\ref{lem:FFPS} of Feigenbaum et al. 
\end{itemize}
\begin{Conj}\label{conj:cons}
For any $n\in\bm{E}$, we have
\begin{equation*}
\pi_n^{t_\infty}(\bm\omega) = \bm\mu(\bm\omega,\bm E) = \bm q(\bm\omega,\bm E) = \bm p(\bm\omega,\bm E).
\end{equation*} 
\end{Conj}

We announce that in Section~\ref{sec:model} we build a model that allows us to elicit estimates 
$\hat{\bm\mu}_t$ of $\bm\mu(\bm\omega,\bm E)$ from real market data accumulated during intervals $[t_0,t]$. Inter alia, the model has a parameter~$\sigma_t$
that measures how beliefs~$\pi_v$, $v \in\mathcal{V}_t$, fluctuate on $[t_0, t]$. The estimates~$\hat{\bm\mu}_t$ quickly stabilize (become almost constant), while the estimates $\hat\sigma_t$ decrease. 
We note that the announced model 
implicitly implies that if we consider~$\pi_v$ for $v \in \mathcal{V}_{t_\infty}$ as time series, then $\pi_v$ and $q^c_t$ will stop fluctuating simultaneously at the same level $\bm\mu(\bm\omega,\bm E) = \bm q(\bm\omega,\bm E)$. 
We may assume that after this moment, the experts will have identical beliefs that will remain stable under information exchange. 
As in the discrete setting, the question of the coincidence between this stable shared belief and $\bm p(\bm\omega,\bm E)$ requires a separate discussion (see, e.g., the next section~\ref{subsec:emp_evid}).

It is also worth noting the following.
Let~$f$ stand for various static aggregation schemes that take as arguments 
only the initial probabilities $\big\{\pi_n^{t_0}(\bm\omega)\big\}_{n\in\bm E}$ 
(without involving external weights or information exchange between the experts). 
There is an extensive literature on the choice of~$f$ in various situations. See, e.g., the survey~\cite{ClWi2007} or the recent article~\cite{ArBaSm2018}.
We make general comments on the comparison of $\hat{\bm\mu}_t$, $q_t^c$, and static schemes~$f$ 
as ways to approximate $\bm p(\bm\omega, \bm E)$. The prices~$q_t^c$ oscillate because of the fluctuations of $\pi_v$, while 
the estimates~$\hat{\bm\mu}_t$ remove, like any averaging, these fluctuations. On the other hand, 
unlike static averages~$f$, the estimates~$\hat{\bm\mu}_t$ take into account information exchange between the experts: they may work in situations like the $xyz$-ex\-am\-ple (see Section~\ref{subsec:pr_mr}), 
where any~$f$ returns nothing better than~$1/2$. Thus, the estimates~$\hat{\bm\mu}_t$ combine the advantages of $q_t^c$ and $f$.\footnote{
We also note that $f$ do not provide a way to operationally elicit 
its arguments $\{\pi_n^{t_0}(\bm\omega)\}_{n\in\bm E}$ from the experts, while the model for calculating $\hat{\bm\mu}_t$ solves the problem of operationalization as well.
} See also the studies \cite{PaSo2019,PrSeMc2017,Pe2022}, which propose alternative approaches to aggregating opinions.

\subsection{Empirical evidence for \texorpdfstring{$\bm p = \bm q$}{p=q}}\label{subsec:emp_evid}
There is a crude (but supported by a vast amount of data) argument in favor of the hypothesis
\begin{equation}\label{eq:pq}
\bm p(\bm\omega,\bm E)=\bm q(\bm\omega,\bm E).
\end{equation}
Namely, there are many studies revealing the fact that last observed prices in prediction markets are well-calibrated (but with certain reservations; see, e.g., \cite{PaCl2013}). 
We refer to the articles \cite{PeLaGi2001, SeWo2004, RoNo2006} that confirm this fact both for prediction markets with real money and with play money. However the calibration is only indirectly related to the hypothesis in question: even if we reasonably suppose that $\bm p$ 
and $\bm q$ are both well-calibrated (i.e. that $\bm q$ is not worse than the last observed price and $\bm p$ is not worse than $\bm q$), this will not imply the identity between them. Concerning calibration and refinement (a more subtle measure of forecast accuracy), 
see \cite{DeGrFi1982} or an exposition in \cite[Section~10.4]{PaIn2009}. 

In order to provide a direct empirical verification of the hypothesis, we need to be able to directly calculate~$\bm p(\bm\omega,\bm E)$. This is partially possible only in a controlled experiment where we regulate (to some extent) the initial information of each expert. 
An interesting example is the study \cite{AlKiPf2009} (see also the classic studies \cite{PlSu1988, FoLu1990}), where the authors also support the idea of utilizing play-money prediction markets in the life sciences. Their experiment simulates the search for an answer to a scientific question (in what order three genes activate each other). There are differences from our setting: experts trade contracts of six types that correspond to six mutually exclusive outcomes $\Omega = \{\omega_1,\dots,\omega_6\}$, an automated market maker is used (see~\cite{Ha2007}), and new external information is revealed to experts in the course of trading (that means that the collective probabilities vary). Nevertheless, these differences do not reduce the importance of the discussed experiment to us. 
We briefly describe the situations that have been modeled in the course of it.

\begin{description}
\item[Public setting] 
The initial information on~$\Omega$ is public from the outset.
We note that even in an experiment where the information is under control, it is impossible to make all $\gamma_n \assign \Gamma_n$ identical. 
In addition to direct information on~$\Omega$, each $\gamma_n$ contains personal signals that determine how the corresponding expert processes such information. This explains why in \cite{AlKiPf2009}, the prices have oscillated even in the public setting. Nevertheless, in such a setting the prices have become close to the collective probabilities that means that prediction markets at least do not ``lose'' public information. 
\item[Private setting]
The information of each expert is private and is transmitted to others through the prices only.
Six experimental markets have been organized in the private setting. In half of the cases, the markets have succeeded to aggregate information: the results do not contradict hypothesis~\eqref{eq:pq}. In three remaining markets the prices have not converged to the collective probabilities: better approximations have been obtained by deriving theoretical beliefs from the initial private information of each expert and by averaging them. It is reasonable to assume that estimates of the corresponding market-driven medians could give better results, provided we could elicit them from the market data.
\item[Mixed setting]
The information of each expert is initially private, but becomes public after a while (direct information exchange is simulated).
All six corresponding markets have succeeded to aggregate information in such a setting. The results are almost the same as in the public setting. 
\end{description}

Thus, we have empirical evidence that prediction market prices are able to aggregate information, but may fail to do so if they are the only information channel between the experts. As might be expected, direct information exchange can greatly improve the reliability of prediction markets.

\section{Extension of Section~\ref{sec:impl}}\label{sec:model}
Here we present the data-driven continuous model of regular prediction markets that has been repeatedly mentioned above, and show that within its framework, our data~\cite{Data2019} do not contradict Conjecture~\ref{conj:cons}. We also show that there exist estimates $\hat{\bm\mu}_t$ of $\bm\mu(\bm\omega,\bm E)$ that converge rapidly and remain stable under trading actions of experts. 
We discuss how these facts can be exploited in order to create a real continuous self-resolving prediction mechanism with direct information exchange.

%
%
\subsection{Data}\label{subsec:data}

Our goal is the creation of a collective intelligence based on play-money prediction markets and intended for 
use in science, medicine, and engineering. 
Nevertheless, for the construction and verification of our continuous model we need the extensive and detailed data~\cite{Data2019} collected from CDA-type real-money prediction markets related to sporting events. The resulting model seems fairly general and independent of whether 
$(\bm\omega,\bm E)$ relates to sports or to science. Eventually we come to the consideration of a continuous self-resolving prediction market with play money.

\begin{wraptable}{r}[0pt]{0.45\linewidth}

\centering
\caption{Data example.}
\label{tab:data_example}
\begin{tabular}{@{}cccc@{}}
\toprule[1.0pt]
q                & $S^+$  & \multicolumn{1}{l}{}   & $S^-$  \\ \midrule
0.4167           & 6.6    & \multirow{2}{*}{$\ge$} & 0.0    \\
0.4505           & 501.9  &                        & 0.0    \\ \cmidrule(l){2-4} 
0.4808           & 315.5  & \multirow{2}{*}{$\ge$} & 20.0   \\
0.4854           & 544.3  &                        & 149.4  \\ \cmidrule(l){2-4} 
$q^-_t = 0.4902$ & 261.5  & $>$                    & 261.1  \\ \cmidrule(l){2-4} 
0.4950           & 588.0  & $=$                    & 588.0  \\ \cmidrule(l){2-4} 
$q^+_t = 0.5000$ & 304.1  & $<$                    & 668.3  \\ \cmidrule(l){2-4} 
0.5025           & 450.3  & \multirow{7}{*}{$\le$} & 1007.9 \\
0.5051           & 8.2    &                        & 8.2    \\
0.5076           & 926.3  &                        & 926.3  \\
0.5102           & 1091.9 &                        & 1561.0 \\
0.5128           & 405.8  &                        & 630.9  \\
0.5155           & 73.6   &                        & 73.6   \\
0.5181           & 48.4   &                        & 63.8   \\ \cmidrule(l){2-4} 
0.5236           & 0.0    & \multirow{2}{*}{$\le$} & 13.0   \\
0.5263           & 0.0    &                        & 2642.5 \\ \bottomrule[1.0pt]
\end{tabular}
\end{wraptable}

The data concern 24 soccer games of 2013--2014. Several markets are considered for each game, and each market contains several mutually exclusive outcomes. 
If there are more than $2$ outcomes in a market, then compound matching is possible and we cannot, generally speaking, reduce the situation to separate markets associated with separate propositional variables. 
But we have collected data for each outcome of each market separately.
Such data contain the dynamics of demand and supply of contracts on the occurrence of the outcome (together with the volumes of concluded transactions).
There may be transactions that have been concluded due to compound matching, but this is not reflected 
in the data and is not important for our model. Thus, the data for each outcome of each market fit into 
the pattern from Table~\ref{PSState}, and it is possible to treat them as data of a separate market for a separate propositional variable (we can merge data if there are exactly $2$ complementary outcomes in a market).  

Let $\bm\omega$ represent one of $576$ outcomes from one of 96 markets related to one of 24 soccer games mentioned above. Let $[t_0,T]$ 
be the interval\footnote{At the pre-announced time~$T$ all non-matched votes have been returned to the owners, and very shortly after that the corresponding soccer game have started (together with a so-called live market).} where the corresponding market has existed. 
Then the data for $\bm\omega$ can be represented as a collection of time series $V^{\pm}_t(q)$ of volumes in dollars that 
have the same meaning as in Section~\ref{subsec:voting}. 
Here $q$ runs over the finite (but quite dense)
grid $Q \subset (0,1)$ of all prices available for trading, and $t$ runs over an observation interval 
$[t_1,t_2] \subset [t_0,T]$ (where the changes of market state have been checked with frequency of approximately $1$~second and $T - t_2 < 20\mbox{ sec.}$). 
We note that for each $t\in[t_1,t_2]$, the corresponding aggregated data represent the whole interval $[t_0,t]$, 
not only $[t_1,t]$: we have started by obtaining $V^{\pm}_{t_1}(q)$ and, after that, have scanned subsequent changes.

Repository~\cite{Data2019} contains our PostgreSQL data and R scripts as well as the R workspace \path{CI.RData} where 
all the scripts and some data for two soccer events (see the data frames \verb+data12+ and \verb+data13+) have been preloaded (see also~\cite{RCT2021, Wi2016}). Other data can be loaded using the functions \verb+GetData+ or \verb+GetDataFromCSV+. 
All calls given in this article should be made from \path{CI.RData}.
For example, the following call returns the data in the $S^\pm$-form for an outcome of a French soccer event.
\begin{verbatim}
    AggData(dat = data13,   # 08 March 2014, Guingamp vs. Evian TG
            outcome = 1,    # Guingamp win outcome
            time = 30*60)   # t = t_1 + 30 min. 
\end{verbatim}
The result is partially shown in Table~\ref{tab:data_example}. It is fully consistent with Table~\ref{PSState}.

\subsection{Continuous model assumptions: probabilities}\label{subsec:mod_prob} 
Suppose there is a prediction market for~$\bm\omega$ that 
exists during a pre-announced interval $[t_0,T]$, generates volumes~$V^{\pm}_{t}(q)$, and yields a sample $\bm e \subseteq \bm E$ of all the experts who add some orders to it during $[t_0,T]$. 
The first two model assumptions concern the experts' probabilities and connect the model with the pre-model described in Assumption~\ref{asmp:A2}.
\begin{Asmp}[connection with the pre-model]
\label{M_0}
For the market just introduced, conditions \ref{A2}--\ref{A2_2} from Assumption~\ref{asmp:A2} hold for $t\in [t_0,T]$ and for $\bm e$ substituted for~$\bm E$.
\end{Asmp}
We refer to the market that is introduced above and satisfies Assumption~\ref{M_0}, as \emph{the regular market}.

In the regular market, we can, as in Section~\ref{subsec:voting}, consider $\pi_v$ where $v$ runs over the sets~$\mathcal{V}_t$ of normalized votes.
The next assumption postulates (inter alia) that in spite of a limited time and number of participants, the regular market 
is representative for the unlimited market 
from Assumption~\ref{asmp:A2} with respect to $\bm\mu(\bm\omega,\bm E)$.
Let $f_{\mu,\sigma}$ and $F_{\mu,\sigma}$ be the density and cumulative distribution functions of $\mathcal{N}(\mu,\sigma^2)$, 
and let~$\delta_x$ be the delta measure concentrated at~$x$.
\begin{Asmp}[beliefs normality and representativeness]\label{M_1}
Let $t \in [t_0,T]$. 
We assume that votes $v \in \mathcal{V}_t$ have been generated by separate orders and that
the values $\{\pi_v\}_{v\in \mathcal{V}_t}$ 
are i.i.d. 
with distribution $\mathcal{N}_{[0,1]}(\mu,\sigma_t^2)$. It is a modifications of the normal distribution with all the values outside the unit interval reset to the points~$0$ and~$1$, i.e. with the density 
functions
$$
	\delta_0(x)\,F_{\mu,\sigma_t}(0) + \mathds{1}_{(0,1)}(x)\, f_{\mu,\sigma_t}(x) + \delta_1(x)\,(1-F_{\mu,\sigma_t}(1)).
$$
We assume that $\mu$ does not depend on $t$ and represents 
the market-driven median that would generated by the unlimited market 
if it was started instead of the regular market 
at the same point~$t_0$:
$$
\bm\mu(\bm\omega,\bm E) = 
\begin{cases}
0,& \mu \le 0;\\
\mu,& 0\le \mu\le 1;\\
1,& \mu \ge 1.
\end{cases}
$$
\end{Asmp}

The sample~$\bm e$ is what we initially have or what we can control. 
What a population~$\bm E$ it represents is a matter of common sense and situation, not mathematics. At least, we must have reason to believe that $\bm p(\bm\omega,\bm E) = \bm p(\bm\omega,\bm e)$. In particular, we may set $\bm E \df \bm e$. 
If we talk about our data~\cite{Data2019}, they have been collected from the largest platform with sports CDA markets. Its numerous users represent most of the countries, and we may
assume that $\bm E$ is the population of all the people except some insiders.\footnote{We assume that coaches or team members do not participate in the trade.}


Next, we note that in fact, Assumption~\ref{M_1} generates a collection of models: for each interval $[t_0, t]$, we have its own averaged dispersion~$\sigma_t$ of beliefs. 
Each such model concerns the aggregated data~$V^{\pm}_{t}(q)$ that include \emph{no} time-related information on how these volumes have been produced during $[t_0,t]$.
In fact, we do not introduce here a stochastic process that accurately models the dynamics of~$\pi_v$, 
but rather make a certain ergodicity assumption: loosely speaking, there exist $\sigma_t$ such that the quantities
$$
    \biggl|\frac{1}{|\mathcal{V}_t|}\sum_{v\in\mathcal{V}_t} \mathds{1}_{[0,\theta]}(\pi_v) - F_{\mu,\sigma_t}(\theta)\biggr|,\quad \theta\in [0,1],
$$
become small as $t$ increases 
(cf. \cite[Section~V.3.4, Problem~3]{Sh1989trans}).

Concerning the distributions $\mathcal{N}_{[0,1]}(\mu,\sigma_t^2)$, we note that they have the desired property that 
the values $0$ and $1$ are not excluded from consideration. 
See also \cite[Figure~5]{WoZi2006}, which empirically justifies our choice of distribution.

\subsection{Continuous model assumptions: utilities}\label{subsec:mod_util}
Here we rely on the lottery-dependent utility function $u(x\mid \lambda, b, L)$ that incorporates the double relativity principle and is defined by formula~\eqref{eq:drp}. 
We continue to discuss such functions in Section~\ref{subsec:util}.

\begin{Asmp}[experts' utilities]\label{M_2}
Let $o_\tau = (s,q,n)^{\pm}$, $\tau\in [t_0,T]$, be any order in the regular market. 
By $b_n^\tau$ we denote the budget of the expert~$n$ that he or she has had immediately \emph{before} placement of~$o_\tau$. 
We assume that in addition to $\pi_n^\tau(\bm\omega)$, the expert~$n$ associates with $o_\tau$ some values 
$$\rho(o_\tau) \in (0,1]\quad\mbox{and}\quad 1-\rho(o_\tau) \in [0,1)$$ 
that play, in accordance with inequality~\eqref{eq:M_2} below, the role of probabilities of
the events ``$o_\tau$~will be completely matched'' and 
``$o_\tau$ will remain completely unmatched'', respectively. 
The alternative ``$o_\tau$~will be partially matched'' is not considered at all. We consider the lottery~$L$ that corresponds to the possible variations of the possession~$b_n^\tau$ due to the order $o_\tau$. 
We assume that placing $o_\tau$, the expert~$n$ 
believes (in terms of the double relativity) that $L$ is better than inaction. Namely, if $o_\tau$ is a buy order, then 
either $\pi_n^\tau(\bm\omega) = 1$ or
\begin{equation}\label{eq:M_2}
\rho(o_\tau)\, U(o_\tau)
+(1-\rho(o_\tau))\,u\big(b_n^{\tau} \,\big |\, \lambda_n^\tau, b_n^{\tau}, L\big)> 1,
\end{equation}
where $U(o_\tau)$ is the expected utility of the order~$o_\tau$ as though it has been completely matched: 
\begin{equation}\label{eq:U}
U(o_\tau) = \pi_n^\tau(\bm\omega)\, u\big(M_L \,\big |\, \lambda_n^\tau, b_n^{\tau}, L\big) 
+(1-\pi_n^\tau(\bm\omega))\, u\big(m_L \,\big |\, \lambda_n^\tau, b_n^{\tau}, L\big),
\end{equation}
where 
$$\pi_n^\tau(\bm\omega)\in(0,1),\quad M_L = b_n^\tau + (1-q)s,\quad m_L = b_n^\tau - qs,\quad\mbox{and}\quad \lambda_n^\tau \in\mathbb{R}.$$
For a sell order, we 
swap $q$ with $1-q$ and $\pi_n^\tau(\bm\omega)$ 
with $1-\pi_n^\tau(\bm\omega)$.
\end{Asmp}
We make some comments on this assumption. First, the quantity~$b_n^\tau$ is the result that the expert~$n$ has expected, before placement of~$o_\tau$, from the market in the worst-case scenario. In other words, $b_n^\tau$ is the amount available for trading at time~$\tau$. 
The expert~$n$ may be already involved in a lottery at time $\tau$ due to previous orders, but 
we assume that he or she thinks locally and decides whether to leave the amount~$b_n^\tau$ untouched or 
to use a part of it in a new lottery (which is considered separately). 
We will see below that nothing depends on the values~$b_n^\tau$ at all. 

Second, the assumption that the expert~$n$ does not believe the order~$o_\tau$ may be partially matched is consistent with the fact that in Assumption~\ref{M_1} we represent the market as a flaw of small ``unit'' orders.

By combining formula~\eqref{eq:drp} for $u(x\mid \lambda, b, L)$ 
with 
relations~\eqref{eq:M_2} and~\eqref{eq:U} and by making simple equivalent transformations, we reach the following consequence.
\begin{Le}\label{lem:cenz_mech}
Let $t\in[t_0,T]$, and let $v\in\mathcal{V}_t$ be one of the votes constituting $V^{+}_t(q)$ for some $q\in(0,1)$.
Consider the order $o_\tau = (s,q,n)^{+}$ that have generated~$v$ at time $\tau\in[t_0,t]$.
We have
\begin{equation}\label{eq:Upls}
U(o_\tau) = U^+\bigl(\pi_n^{\tau}(\bm\omega), q, \lambda_n^\tau\bigr)\df \frac{\pi_n^{\tau}(\bm\omega)}{\theta^+_{\lambda_n^\tau}(q)},
\quad
\mbox{where}\quad
\theta^+_{\lambda_n^\tau}(q) \df \frac{1-\exp({-\lambda_n^\tau q})}{1-\exp({-\lambda_n^\tau})},
\end{equation}
and $o_\tau$ satisfies Assumption~\textup{\ref{M_2}} if and only if
\begin{equation}\label{eq:theta_plus}
\pi_v =\pi_n^{\tau}(\bm\omega)
>\theta^+_{\lambda_n^\tau}(q). 
\end{equation}

If $v$ is one of $V^{-}_t(q)$ votes and $o_\tau = (s,q,n)^{-}$ is the corresponding order, then 
we have
\begin{equation}\label{eq:Umns}
U(o_\tau) = U^-\bigl(\pi_n^{\tau}(\bm\omega), q, \lambda_n^\tau\bigr)\df \frac{1-\pi_n^{\tau}(\bm\omega)}{1-\theta^-_{\lambda_n^\tau}(q)},
\quad\mbox{where}\quad
\theta^-_{\lambda_n^\tau}(q) \df \frac{1-\exp(\lambda_n^\tau q)}{1-\exp(\lambda_n^\tau)},
\end{equation}
and $o_\tau$ satisfies Assumption~\textup{\ref{M_2}} if and only if
\begin{equation}\label{eq:theta_minus}
\pi_v < \theta^-_{\lambda_n^\tau}(q).
\end{equation}
\end{Le}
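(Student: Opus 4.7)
The plan is to unpack Assumption~\ref{M_2} for a buy order, do the bookkeeping on $m_L$, $M_L$, $b_n^\tau$, substitute into~\eqref{eq:drp} and~\eqref{eq:U}, and finally observe that the key inequality~\eqref{eq:M_2} collapses to a simple strict inequality because $u(b_n^\tau\,|\,\lambda_n^\tau,b_n^\tau,L)=1$ and $\rho(o_\tau)>0$. The sell case will then follow by the explicit symmetry $(q,\pi)\mapsto(1-q,1-\pi)$ that is already built into Assumption~\ref{M_2}, together with a short algebraic identity relating $\theta^+$ and $\theta^-$.

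First I would record, for the buy order $o_\tau=(s,q,n)^+$, the values $M_L = b_n^\tau+(1-q)s$ and $m_L = b_n^\tau-qs$, whence $M_L-m_L=s$ and, crucially, $(b_n^\tau-m_L)/(M_L-m_L)=q$. Substituting into~\eqref{eq:drp} yields
\[
u(M_L\,|\,\lambda_n^\tau,b_n^\tau,L)=\frac{1-\exp(-\lambda_n^\tau)}{1-\exp(-\lambda_n^\tau q)}=\frac{1}{\theta^+_{\lambda_n^\tau}(q)},\qquad u(m_L\,|\,\lambda_n^\tau,b_n^\tau,L)=0,
\]
and plugging these into~\eqref{eq:U} gives $U(o_\tau)=\pi_n^\tau(\bm\omega)/\theta^+_{\lambda_n^\tau}(q)$, which is~\eqref{eq:Upls}. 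Since $u(b_n^\tau\,|\,\lambda_n^\tau,b_n^\tau,L)=1$ by the limiting convention, inequality~\eqref{eq:M_2} reads $\rho(o_\tau)U(o_\tau)+(1-\rho(o_\tau))>1$, equivalently $\rho(o_\tau)\bigl(U(o_\tau)-1\bigr)>0$; because $\rho(o_\tau)\in(0,1]$ this is exactly $U(o_\tau)>1$, i.e.\ $\pi_n^\tau(\bm\omega)>\theta^+_{\lambda_n^\tau}(q)$. To align with the ``either $\pi_n^\tau(\bm\omega)=1$ or~\eqref{eq:M_2}'' formulation, I would note the elementary estimate $\theta^+_{\lambda}(q)<1$ for $q\in(0,1)$ (immediate from the monotonicity of $t\mapsto 1-\exp(-\lambda t)$, treating $\lambda>0$, $\lambda<0$, and $\lambda=0$ separately), so the $\pi_n^\tau(\bm\omega)=1$ branch is automatically absorbed into~\eqref{eq:theta_plus}.

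For the sell case, Assumption~\ref{M_2} instructs to swap $q\leftrightarrow 1-q$ and $\pi\leftrightarrow 1-\pi$, so the buy analysis gives $U(o_\tau)=(1-\pi_n^\tau(\bm\omega))/\theta^+_{\lambda_n^\tau}(1-q)$ and the threshold condition $1-\pi_n^\tau(\bm\omega)>\theta^+_{\lambda_n^\tau}(1-q)$. A one-line manipulation (multiplying the numerator and denominator of $\theta^-$ by $e^{-\lambda}$, or directly) yields the identity $1-\theta^-_{\lambda}(q)=\theta^+_{\lambda}(1-q)$, which converts these expressions into~\eqref{eq:Umns} and~\eqref{eq:theta_minus}.

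Nothing in this argument is deep; the only points that require attention are the identification $(b_n^\tau-m_L)/(M_L-m_L)=q$ (which is the whole reason the awkward four-argument utility~\eqref{eq:drp} collapses to something clean), the handling of the boundary case $\pi_n^\tau(\bm\omega)=1$ via $\theta^+_{\lambda}(q)<1$, and the algebraic identity between $\theta^+$ and $\theta^-$. These are the three spots where a careless substitution would cause the statement to appear mysteriously incorrect, so I would present them explicitly and then let the rest of the proof be a single line of symmetry.
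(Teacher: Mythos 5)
Your proof is correct and is exactly the ``simple equivalent transformations'' the paper leaves to the reader: the identifications $M_L-m_L=s$ and $(b_n^\tau-m_L)/(M_L-m_L)=q$, the collapse of~\eqref{eq:M_2} to $U(o_\tau)>1$ via $u(b_n^\tau\mid\lambda_n^\tau,b_n^\tau,L)=1$ and $\rho(o_\tau)>0$, and the identity $1-\theta^-_{\lambda}(q)=\theta^+_{\lambda}(1-q)$ all check out, as does absorbing the $\pi_n^\tau(\bm\omega)=1$ branch via $\theta^+_{\lambda}(q)<1$. The only cosmetic slip is attributing $u(b_n^\tau\mid\lambda_n^\tau,b_n^\tau,L)=1$ to the limiting convention: for a genuine lottery $L$ with $m_L<b_n^\tau<M_L$ this is a direct evaluation of~\eqref{eq:drp}, the convention being needed only for the inaction lottery $L_b$.
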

Note that 
in Proposition~\ref{lem:cenz_mech}, nothing depends on $s$, $b_n^{\tau}$, or $\rho(o_\tau)$.

Hereinafter, we use some non-standard but convenient notation for certain limit relations. By the symbol $\searrow$ we denote the relation ``decreases to''. Formally speaking, ``$y\searrow b$ as $x\searrow a$'' means that $y$ is an increasing function of $x$ and~$\lim\limits_{x\to a+} y= b$.
\begin{Asmp}[connecting all together]\label{M_3}
Suppose Assumptions 
\ref{M_1} and~\ref{M_2} hold for our regular market. Fix $t\in[t_0,T]$ and suppose $\tau$ runs over $[t_0,t]$. We consider all the orders~$o_\tau$ that 
have generated the votes in~$\mathcal{V}_t$. In Assumption~\ref{M_2}, we take the identical parameters 
$\lambda_n^\tau = \lambda_t < 0$ for all such~$o_\tau$. In addition, we assume that there is a fixed relation between $\lambda_t$ and the parameters $\mu$ and~$\sigma_t$ (see Assumption~\ref{M_1}), provided $\mu \in (0,1)$. Namely, we have $\lambda_t = \lmd(\mu,\sigma_t)$, where 
$$\lmd\colon(0,1)\times(0,+\infty) \to (-\infty,0)$$ is a certain function such that 
\begin{equation}\label{eq:lmd_prop}
|\lmd(\mu,\sigma)|\searrow 0 \quad\mbox{as}\quad \sigma\searrow 0.
\end{equation}
\end{Asmp}

Similar to the situation with~$\sigma_t$, for each time $t$ we have a specific 
parameter~$\lambda_t$ applied to the whole interval~$[t_0,t]$. 
Subsequently, we use a specific function~$\lmd(\cdot,\cdot)$ that is derived from 
certain conditions of economic equilibrium inspired by \cite{Ma2006,WoZi2006}. It is constructed for not too large $\sigma$, but this does not prevent its use in any practical situation. Details are provided in Section~\ref{subsec:econ_equil}. 

We have the following simple fact.
\begin{Fact}\label{rem:gap_dec}
Under Assumption~\textup{\ref{M_3}}, we have $\theta^+_{\lambda_t}(q)<q<\theta^-_{\lambda_t}(q)$ and 
$$|q -\theta^\pm_{\lambda_t}(q)| \searrow 0 \quad\mbox{as}\quad \sigma_t\searrow 0.$$
\end{Fact}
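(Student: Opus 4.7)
The plan is to reparametrize by $\mu \df -\lambda_t>0$ (legitimate since Assumption~\ref{M_3} gives $\lambda_t<0$) and to study
$$
\Theta^+(\mu,q) \df \theta^+_{\lambda_t}(q) = \frac{e^{\mu q}-1}{e^\mu-1},\qquad \Theta^-(\mu,q) \df \theta^-_{\lambda_t}(q) = \frac{1-e^{-\mu q}}{1-e^{-\mu}}
$$
as functions of $\mu>0$ for fixed $q\in(0,1)$. Under this correspondence, property~\eqref{eq:lmd_prop} of Assumption~\ref{M_3} translates exactly to $\mu\searrow 0$ as $\sigma_t\searrow 0$, so both halves of the claim reduce to statements about $\Theta^\pm$ viewed as functions of $\mu$.

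For the strict inequalities $\theta^+_{\lambda_t}(q)<q<\theta^-_{\lambda_t}(q)$ I would invoke strict convexity of $\exp$. Writing $\mu q = q\cdot\mu + (1-q)\cdot 0$, Jensen's inequality gives $e^{\mu q}-1 < q(e^\mu-1)$ for $q\in(0,1)$, $\mu>0$; dividing by $e^\mu-1>0$ yields $\Theta^+(\mu,q)<q$. The symmetric Jensen bound $e^{-\mu q} < (1-q) + q e^{-\mu}$ gives $1-e^{-\mu q} > q(1-e^{-\mu})$, whence $\Theta^-(\mu,q)>q$.

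For the $\searrow$ part I would show that $h_+(\mu)\df q-\Theta^+(\mu,q)$ and $h_-(\mu)\df \Theta^-(\mu,q)-q$ are strictly increasing in $\mu$ and vanish as $\mu\to 0^+$. Differentiating and simplifying, the sign of $h_+'(\mu)$ coincides with that of $(1-q)e^{\mu(q+1)} + q e^{\mu q} - e^\mu$; after dividing by $e^{\mu q}$ this is precisely the strict Jensen inequality $(1-q)e^\mu + q > e^{\mu(1-q)}$, so $h_+'(\mu)>0$ on $\mu>0$. The analogous reduction for $h_-'(\mu)$ collapses to $q e^\mu + (1-q) > e^{\mu q}$, again Jensen. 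Finally, the Taylor expansions $e^{\mu q}-1 = \mu q + O(\mu^2)$ and $e^\mu-1 = \mu + O(\mu^2)$ give $\Theta^\pm(\mu,q) = q + O(\mu)$, so $h_\pm(\mu)\to 0$ as $\mu\to 0^+$; composing with $\mu = |\lmd(\mu,\sigma_t)|\searrow 0$ completes the argument.

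The hardest part is purely bookkeeping: the derivative computation for the monotonicity step is algebraically tedious, but every non-trivial inequality that appears at the end is nothing more than strict convexity of $\exp$, so no non-routine ingredient is required.
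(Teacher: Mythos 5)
Your proof is correct, and since the paper states Fact~\ref{rem:gap_dec} without proof (``the following simple fact''), your argument --- strict convexity of $\exp$ for both the inequalities $\theta^+_{\lambda_t}(q)<q<\theta^-_{\lambda_t}(q)$ and the monotonicity of $|q-\theta^\pm_{\lambda_t}(q)|$ in $|\lambda_t|$, followed by composition with property~\eqref{eq:lmd_prop} of Assumption~\ref{M_3} --- is exactly the routine verification the authors intended; I checked the derivative reductions and both collapse to the stated Jensen inequalities. The only blemish is notational: your $\mu\df-\lambda_t$ collides with the paper's $\mu$ (the mean in Assumption~\ref{M_1} and the first argument of $\lmd$), which makes your final line ``$\mu=|\lmd(\mu,\sigma_t)|$'' read as a fixed-point equation; rename that parameter.
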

We can give the following interpretation of this fact.
The less the dispersion of beliefs is, 
the less each expert~$n$ expects that he or she does not know something that others do and that his or her probability can be refined. 
High confidence in his or her probability causes 
$n$ to be less deviating from it and to be less inclined to trade at unfavorable prices: to buy at $q > \pi_n^\tau(\bm\omega)$ or to
sell at $q < \pi_n^\tau(\bm\omega)$. An extreme example is where all the experts reliably know that $\bm\omega$ represents a fair coin toss, and have no reason to be risk-seekers.

This can also be treated as follows. A possible gain or loss in a lottery connected with~$\bm\omega$ 
makes the outcomes $\bm\omega\mapsto 1$ and $\bm\omega\mapsto 0$ positive or negative for an expert~$n$ who measures this lottery. 
This fact leads to optimism bias~\cite{We1980}. In other words, the expert~$n$'s mental probabilities of winning and losing the lottery does not coincide with his or her probabilities
$\pi_n^\tau(\bm\omega)$ and $1-\pi_n^\tau(\bm\omega)$ of the underlying outcomes. 
In this interpretation, Fact~\ref{rem:gap_dec} implies that less ambiguity leads to less optimism bias. 
For classical von Ne\-um\-ann--Mor\-gen\-stern utilities, 
a similar relation between optimism/pessimism, risk seeking/aversion, and ambiguity is formalized in~\cite{DiPoRo2017} 
(see also the experimental study~\cite{WeSo2014}).

The optimistic behavior discussed above does not contradict ambiguity aversion~\cite{El1961}, which is another well known behavioral characteristic of
decision makers. 
Namely, 
an expert can always choose to wait until his or her comparative ignorance~\cite{FoTv1995}, causing the feeling of ambiguity,
decreases. This is 
\emph{not} the same as to choose inaction and to leave the market: we do not even measure the waiting alternative in terms of utility. 
The expert's optimism also decreases, but at some point a certain lottery may become preferable to further waiting.
We will see further that~$\sigma_t$ decreases over time together with the overall comparative ignorance. 
The tendency for experts to wait is confirmed by the fact that the intensity of trading increases over time.


\subsection{Continuous model meets real data}\label{subsec:mod_meets_dat}
\begin{figure}
\includegraphics[scale = 0.6]{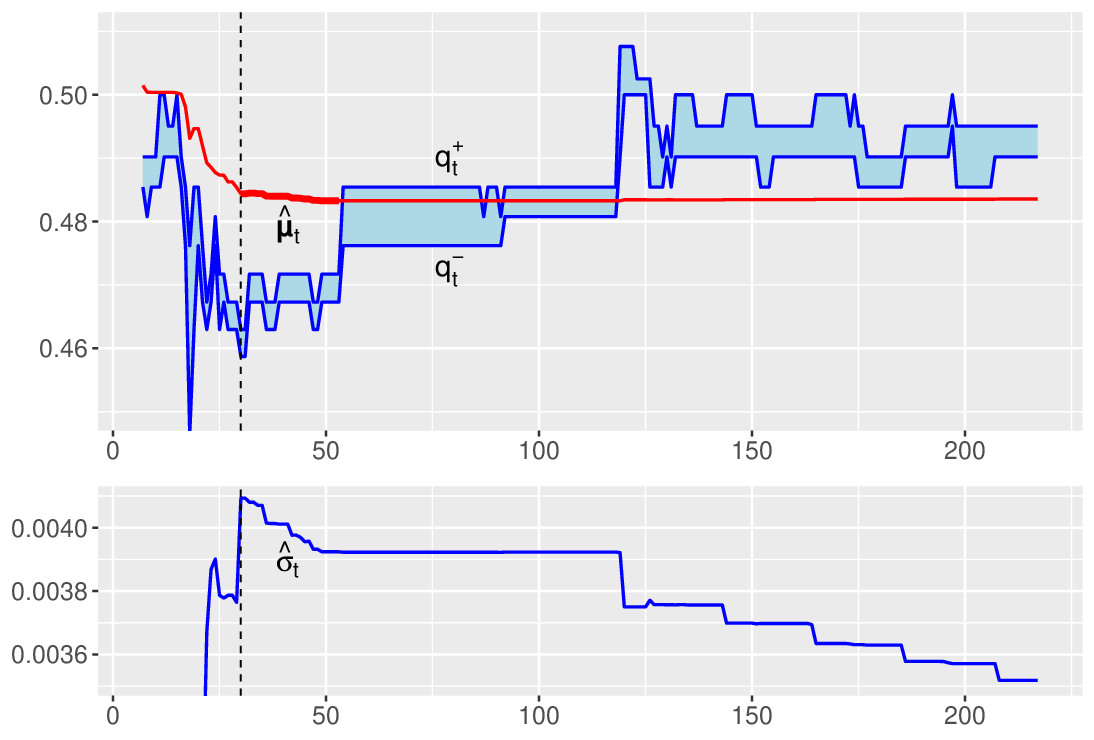}
\caption{Estimates in operational time.}
\label{fig:mu_sgm_gr}
\end{figure}
Suppose $\bm\omega$ represents one of the outcomes described in Section~\ref{subsec:data} and  
the volumes~$V_t^\pm(q)$ represent the corresponding real data in~\cite{Data2019}.
Let $t$ be from the observation interval $[t_1,t_2]$. 
We can calculate 
the probability of the event that $\pi_v$ satisfy inequalities~\eqref{eq:theta_plus} 
and~\eqref{eq:theta_minus} where $v$ runs over~$\mathcal{V}_t$ and all $\lambda_n^\tau$ coincide with single~$\lambda$. We interpret the logarithm of
this probability as a log-likelihood function:
\begin{multline}\label{eq:logL}
    \log L(\mu,\sigma\mid\lambda)\\ \df 
    \sum_{q\in Q} \Big(V_t^+(q)\log\big(1-F_{\mu,\sigma}\big(\theta^+_{\lambda}(q)\big)\big) +
    V_t^-(q)\log F_{\mu,\sigma}\big(\theta^-_{\lambda}(q)\big)\Big).
\end{multline}
Certain issues concerning the validity of applying the function $\log L$ are discussed in Section~\ref{subsec:market_dyn}.
It turns out that we can choose $\hat\lambda_t < 0$ for which 
a numerical optimization of $\log L\bigl(\mu,\sigma\bigm|\hat\lambda_t\bigr)$ over $\mu$ and $\sigma$ gives estimates $\hat{\bm\mu}_t$ and $\hat\sigma_t$ such that
${\hat\lambda_t = \lmd(\hat{\bm\mu}_t,\hat\sigma_t)}$.
Details of the corresponding algorithm can be found in Section~\ref{subsec:econ_equil}.
Now we employ this approach to \verb+data13+ from \path{CI.RData} (see~\cite{Data2019} and Section~\ref{subsec:data}) and show how the estimates in question vary over time. 
We make the following calls.
\begin{verbatim}
    gr <- GetGraphOperTime(dat = data13, outcome = 1, voldlt = 1000)
    PlotMuSgm(gr)   # requires ggplot2 and grid.
\end{verbatim}
The result is shown in Figure~\ref{fig:mu_sgm_gr}, where all the graphs are presented in the operational time~$\nu$ that corresponds to the total volume of votes:
$$
    t=t(\nu) \df \min\{\tau\in[t_1,t_2]\mid |\mathcal{V}_\tau| \ge \nu\}.
$$
The numbers under the graphs are the volume in thousands of dollars. The price~$q_t^c$ oscillates somewhere between $q_t^-$ and $q_t^+$, while $q^\pm_t$ oscillate themselves throughout 
$[t_1,t_2]$. At the same time, the estimate~$\hat{\bm\mu}_t$ quickly converges and remains extremely stable in spite of huge volumes of offers. 
The estimate $\hat\sigma_t$ measures the average variability of beliefs during $[t_0,t]$. 
In Figure~\ref{fig:mu_sgm_gr}, it begins to decrease after $\hat{\bm\mu}_t$ has stabilized (after the dashed line). 

Suppose $0<\bm\mu(\bm\omega,\bm E) < 1$ and consider a price $q>\bm\mu$. Suppose we are within Assumption~\ref{M_3} (which includes Assumptions~\ref{M_1} and~\ref{M_2}), and suppose we begin 
to reduce~$\sigma_t$. Then, due to Fact~\ref{rem:gap_dec}, after a certain point we will have ${\bm\mu<\theta^+_{\lambda_t}(q)<q,}$
and $q-\theta^+_{\lambda_t}$ will continue to shrink (together with the density of the distribution of beliefs around~$\bm\mu$). Therefore, 
according to 
\eqref{eq:theta_plus}, we will observe an unlimited decrease in the probability of occurrence of beliefs that could generate buy orders for~$q$. Again, 
due to Fact~\ref{rem:gap_dec} and~\eqref{eq:theta_minus}, the same is true for sell orders for any price $q<\bm\mu$. 
Thus, informally speaking, a range of prices where concluded transactions are probable will shrink around~$\bm\mu$. This reasoning, together with 
the constancy of~$\hat{\bm\mu}_t$ and a developing decrease in $\hat\sigma_t$ observed in Figure~\ref{fig:mu_sgm_gr}, leads to the following conclusion.

\begin{Con} Consider the regular market, which is introduced at the beginning of Section~\textup{\ref{subsec:mod_prob}} and satisfies Assumption~\textup{\ref{M_0}}. 
There exists its reasonable model \textup{(}by which we mean Assumption~\textup{\ref{M_3}}\textup{)} 
such that our data, interpreted through this model, do not contradict the following hypothesis. 
In the unlimited market, which is introduced in Assumption~\textup{\ref{asmp:A2}} and is partially
implemented by the regular market, the beliefs $\pi_n^t(\bm\omega)$ would approaches~$\bm\mu$, transactions would conclude at prices increasingly close to $\bm\mu$, and we would finally have 
\begin{equation}\label{eq:final_state_PM2}
    \pi_n^{t_\infty}(\bm\omega)= \bm\mu(\bm\omega,\bm E)=\bm q(\bm\omega,\bm E),\quad n\in\bm E.
\end{equation}
\end{Con}
\begin{Con}\label{con:reg_mkt}
The model under discussion gives an algorithm that allows us to derive, from our data, the estimate~$\hat{\bm\mu}_t$ of $\bm\mu(\bm\omega,\bm E)$ that becomes extremely stable after a certain point. It is reasonable to assume 
that if in the regular market 
the estimate $\hat{\bm\mu}_T$ is public at $T$, 
then we will have
\begin{equation}\label{eq:final_state_M0}
    \pi_n^{T}(\bm\omega)= \hat{\bm\mu}_T,\quad n\in\bm e.
\end{equation}
\end{Con}

Fact~\ref{rem:d_less_N} 
implies that identities~\eqref{eq:final_state_PM2} and~\eqref{eq:final_state_M0} can be considered as an analogue of
relation~\eqref{eq:final_state}. 
We also note that the additional identity $\pi_n^{t_\infty}(\bm\omega)=\bm p(\bm\omega,\bm E)$ does not contradict theoretical and empirical considerations presented in Sections~\ref{subsec:pen_res} and~\ref{subsec:emp_evid}.  
In Section~\ref{subsec:market_dyn}, we show that the data does not contradict the continuous model 
itself, and in 
Sections~\ref{subsec:econ_equil} and~\ref{subsec:util}, we
describe how this model has been chosen. 



\subsection{Interpretable collective intelligence: continuous considerations} 
\label{subsec:sr_reality}
Consider the regular market that is described at the beginning of Section~\ref{subsec:mod_prob} and is assumed to satisfy Assumption~\ref{M_0}. Suppose there is a pre-announced algorithm~$\mathfrak{A}$ for eliciting $\hat{\bm\mu}_{t}$ from $V^{\pm}_t(q)$, and $\hat{\bm\mu}_{T}$ is public at~$T$. 
We introduce the following modifications of the market.
\begin{enumerate}[label = \textup{(md\arabic*)}]
\item\label{md1} 
Each participant $n\in\bm e$
gets one and the same volume $b_n = b$ of \emph{play money} that
can be employed in this market (and only in it): the experts are assumed to have an equal opportunity to affect the market by trading. 
\item\label{md2} 
We 
resolve the market 
in accordance with $\bm\omega[\hat{\bm\mu}_{T}]$.
\end{enumerate}
In addition to Assumption~\ref{M_0}, 
we make an assumption that is a variety of the temporal coherence property.
An expert is temporal coherent if he or she cannot predict whether his or her belief will increase or decrease on average (see also the discussion in~\cite[23--26]{PaIn2009}). In the Bayesian setting, this means that his or her learning scheme forms a martingale. 
In our pre-Ba\-ye\-si\-an language, temporal coherence between~$t$ and~$T$ can be postulated as follows.
\begin{Asmp}[temporal coherence]\label{TC}   
    For $n\in\bm e$ and $t\in[t_0,T]$, we have 
        $$\pi_n^t\big(\bm\omega\big[\pi_n^{T}(\bm\omega)\big]\big) = \pi_n^t(\bm\omega),$$
    where the existence of the left expression is postulated as in Assumption~\ref{asmp:A0} with 
    $\bm\omega\big[\pi_n^{T}(\bm\omega)\big]$ 
    instead of~$\bm\omega$.
    
\end{Asmp}    
We refer to the market in question, which is modified in accordance with~\ref{md1} and~\ref{md2} and satisfies Assumptions~\ref{M_0} and~\ref{TC}, as \emph{the collective intelligence}.
\begin{Def}
The collective intelligence is $\mu$-\emph{efficient} if each expert $n\in\bm e$ satisfies the following conditions.
\begin{enumerate}[label = \textup{($\mu$\arabic*)}]
    \item\label{j1} At each time~$t\in [t_0,T]$, there exists $\delta \in[0, T-t]$ such that the expert~$n$ believes that if he or she 
    publicly generates, during an interval $[t,t+\delta]$, certain signals $\alpha^t_n\assign A^t_n$ 
    (i.e. makes $A^t_n$ a part of $\bigcap_{n\in\bm e} \Gamma_n^{t+\delta}$), then this will lead to the identity 
    $\pi_n^T(\bm\omega) = \hat{\bm\mu}_{T}$ whatever orders the expert~$n$ will place during~$[t,T]$. 
    \item\label{j2} If $n$ places an order~$o_t$, 
    then he or she believes that he or she \emph{will 
    indeed} publicly generate $\alpha^t_n\assign A^t_n$ during $(t,t+\delta]$. 
\end{enumerate}
\end{Def}

We make some comments concerning~\ref{j1} and~\ref{j2}. First, we emphasize that the signals $\alpha^t_n\assign A^t_n$ do not depend on the expert $n$'s planned voting actions: $n$~is going to transmit these signals to others directly. Second, we clarify that $\delta$ depends on $t$ and $n$: 
the interval $[t+\delta,T]$ must be long enough for the collective intelligence to manage to process the new information. Third, we note that Assumption~\ref{TC} implies that if $n$ 
satisfies~\ref{j1} and believes that he or she will generate $\alpha^t_n\assign A^t_n$ during $(t,t+\delta]$, then
\begin{equation}\label{eq:sh_cons_ci}
\pi_n^t(\bm\omega[\hat{\bm\mu}_{T}]) = \pi_n^t\big(\bm\omega\big[\pi_n^{T}(\bm\omega)\big]\big) = \pi_n^t(\bm\omega).
\end{equation}
We come to the following statement. 
\begin{Fact}\label{rem:mu_eff_cor_1}
If an expert satisfies~\ref{j1} and~\ref{j2}, then in the voting he or she relies 
on his or her mental probability of $\bm\omega\mapsto 1$:
when $n\in\bm e$ places an order~$o_t$, 
the probability $\pi_n^t(\bm\omega[\hat{\bm\mu}_{T}])$ exists and coincides with
$\pi_n^t(\bm\omega)$.
\end{Fact}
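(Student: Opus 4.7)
The plan is to invoke equation~\eqref{eq:sh_cons_ci}, which was derived right before the statement, by verifying that its hypothesis is met as soon as $n$ satisfies both \ref{j1} and \ref{j2}. Concretely, I would proceed in three short steps.

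First, I fix $t \in [t_0, T]$, suppose that $n \in \bm e$ places an order $o_t$, and apply~\ref{j2}: the expert $n$ believes that he or she will publicly generate the signals $\alpha^t_n \assign A^t_n$ during $(t, t+\delta]$. This is precisely the antecedent under which the remark preceding the statement combines \ref{j1} with Assumption~\ref{TC} to obtain~\eqref{eq:sh_cons_ci}.

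Second, I would argue that $\pi_n^t(\bm\omega[\hat{\bm\mu}_T])$ actually exists. Under \ref{j1}, once $n$ believes (by the previous step) that the signals will be generated during $(t,t+\delta]$, he or she also believes $\pi_n^T(\bm\omega) = \hat{\bm\mu}_T$. Since $\hat{\bm\mu}_T$ is publicly determined by the pre-announced algorithm $\mathfrak{A}$, the value $\hat{\bm\mu}_T$ is known to $n$ in the sense required by Assumption~\ref{asmp:A0_1} (basic coherence), so that Assumption~\ref{asmp:A0} applies to $\bm\omega[\hat{\bm\mu}_T]$. Simultaneously, Assumption~\ref{TC} applied at $n$ gives the existence of $\pi_n^t(\bm\omega[\pi_n^T(\bm\omega)])$ along with the identity $\pi_n^t(\bm\omega[\pi_n^T(\bm\omega)]) = \pi_n^t(\bm\omega)$.

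Third, I chain these equalities: the belief $\pi_n^T(\bm\omega) = \hat{\bm\mu}_T$ lets $n$ identify $\bm\omega[\hat{\bm\mu}_T]$ with $\bm\omega[\pi_n^T(\bm\omega)]$, so $\pi_n^t(\bm\omega[\hat{\bm\mu}_T]) = \pi_n^t(\bm\omega[\pi_n^T(\bm\omega)]) = \pi_n^t(\bm\omega)$, which is exactly~\eqref{eq:sh_cons_ci} and establishes the statement. The main subtlety, and the only place where anything genuine happens, is justifying the replacement of the objective value $\hat{\bm\mu}_T$ by the subjective $\pi_n^T(\bm\omega)$ inside the generator $\bm\omega[\cdot]$; but this is precisely what the combination of \ref{j1} and \ref{j2} delivers, since from $n$'s own standpoint these two numbers coincide and hence parameterize the same propositional variable in the logical inference that Assumption~\ref{TC} ranges over.
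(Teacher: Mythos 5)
Your proposal is correct and follows essentially the same route as the paper: the Fact is exactly the observation that \ref{j2} supplies, at the moment an order is placed, the belief needed to activate the chain $\pi_n^t(\bm\omega[\hat{\bm\mu}_{T}]) = \pi_n^t\big(\bm\omega\big[\pi_n^{T}(\bm\omega)\big]\big) = \pi_n^t(\bm\omega)$ already established in~\eqref{eq:sh_cons_ci} from \ref{j1} and Assumption~\ref{TC}. Your additional remarks on existence via Assumptions~\ref{asmp:A0} and~\ref{asmp:A0_1} only make explicit what the paper leaves implicit in Assumption~\ref{TC}.
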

Finally, substituting $T$ for $t$ in \ref{j1}, we immediately come to the following fact, which may,
due to Fact~\ref{rem:shorten_fs_sr}, 
be viewed as an analogue of 
Fact~\ref{rem:final_state_sr}. 
\begin{Fact}
The property of $\mu$-efficiency includes property~\eqref{eq:final_state_M0}.
\end{Fact}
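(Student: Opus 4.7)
The plan is to specialize~\ref{j1} at $t = T$. Since $\delta$ must satisfy $\delta \in [0, T - t]$, taking $t = T$ forces $\delta = 0$ and collapses the purported signalling window $[T, T+\delta]$ to the single point $\{T\}$, so in particular we may take $A^T_n = \varnothing$. With this degenerate choice, the hypothesis of~\ref{j1} (``if $n$ publicly generates certain signals during $[T,T]$\dots'') is vacuously met: there is nothing to generate on an empty time interval, and likewise no orders can be freshly placed during the trivial interval $[T,T]$, so the qualifier ``whatever orders the expert will place during $[T,T]$'' imposes no further restriction.

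Consequently the conclusion of~\ref{j1} specializes to the unconditional belief, on the part of expert $n$ at time $T$, that $\pi_n^T(\bm\omega) = \hat{\bm\mu}_T$. By Assumption~\ref{asmp:A0}, $\pi_n^T(\bm\omega)$ is the unique value in $[0,1]$ that is known to $n$ as his or her mental probability of $\bm\omega \mapsto 1$ at time $T$, while the public estimate $\hat{\bm\mu}_T$ is also known to $n$ at time $T$ by construction of the collective intelligence. Since, in the paper's convention, ``believes'' means ``takes for granted in the further logical inference'', a belief equating two values already known to $n$ asserts their literal equality. Hence $\pi_n^T(\bm\omega) = \hat{\bm\mu}_T$, and applying this to every $n \in \bm e$ yields exactly~\eqref{eq:final_state_M0}.

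There is essentially no substantive obstacle: the whole argument is a one-line boundary specialization of~\ref{j1}. The only minor point worth checking is that $\delta = 0$ together with $A^T_n = \varnothing$ is a legitimate instance of the existential quantifier in~\ref{j1}; this is immediate since the interval $[0, T - t]$ is closed and contains~$0$, and the quantifier over signals $\alpha^t_n \assign A^t_n$ admits the empty assignment.
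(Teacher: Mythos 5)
Your proposal is correct and matches the paper's own derivation, which likewise obtains the Fact by "substituting $T$ for $t$" in~\ref{j1}, so that $\delta=0$ is forced, the conditional collapses, and the believed identity $\pi_n^T(\bm\omega)=\hat{\bm\mu}_T$ holds for every $n\in\bm e$, i.e.~\eqref{eq:final_state_M0}. Your extra care about the empty signal set $A^T_n=\varnothing$ and about ``believes'' yielding literal equality only makes explicit what the paper leaves implicit.
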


We present \emph{heuristic} arguments that $\mu$-efficiency is some sort of equilibrium in a sense close 
to Theorem~\ref{lem:eqlb}. 
We note that for further studies, the experimental verification and the practical implementation of our ideas seem to be much more preferable and realistic than a complete formalization of the considerations below. 
We also need to adapt to each other an automated market maker and 
our algorithm for 
eliciting~$\hat{\bm\mu}_t$ (see also Section~\ref{subsec:market_dyn}).

Consider an individual expert $m \in \bm e$ and make the following assumption.
\begin{Asmp}\label{asmp:equil_premise}
    During $[t_0,T]$, the expert $m$ believes
    \begin{enumerate}
        \item that the others experts $n\in \bm{e}\setminus\{m\}$ satisfy
        \ref{j1} and~\ref{j2};
        \item\label{it:alg} that Conclusion~\ref{con:reg_mkt} is true for the original regular market without modifications~\ref{md1} and~\ref{md2},
        but with public~$\hat{\bm\mu}_T$ calculated by~$\mathfrak{A}$.
    \end{enumerate}
\end{Asmp}
Section~\ref{subsec:mod_meets_dat} and further details in Section~\ref{sec:details} justify the existence of a suitable algorithm that induces~\ref{it:alg} and can be taken as~$\mathfrak{A}$.

Due to Fact~\ref{rem:mu_eff_cor_1}, the expert~$m$ believes that the others rely on $\pi_n^t(\bm\omega)$ in their trading. 
From his or her point of view, the collective intelligence looks like a normal market where 
constraints on experts' ability to transmit information through 
trading are compensated by direct information exchange. Concerning the constraints just mentioned, in 
the private setting of the experiment~\cite{AlKiPf2009} (see also Section~\ref{subsec:emp_evid}) we can presumably see that critical information 
may be poorly transmitted through an ordinary market with~\ref{md1}. We may assume that this effect strengthens as the number of experts without this information increases. 
Suppose the expert~$m$ indeed believes he or she cannot distort the others' probabilities by voting. 
Since $m$ regards the collective intelligence as the regular market with an active channel for direct information exchange, and believes that in the regular market the algorithm~$\mathfrak{A}$ would give 
the estimate~$\hat{\bm\mu}_{t}$ remaining stable under the order flaw after a certain point, 
the following assumption is justified.
\begin{Inf}
    At each time $t\in[t_0,T]$, the expert~$m$ believes
    that he or she cannot 
    affect~$\hat{\bm\mu}_{T}$ by his or her voting actions alone. 
\end{Inf}
The expert~$m$ also believes that~$\mathfrak{A}$ gives relation~\eqref{eq:final_state_M0} in the regular market.
Therefore, since $m$ treats the collective intelligence as the regular market with the main information channel separated from the market itself, 
we may assume that if the expert~$m$ timely adds, like the others, his or her information to this channel, 
then he or she will believe that $\pi_m^T(\bm\omega) = \hat{\bm\mu}_{T}$ regardless of his or her voting actions. 
Namely, we can make the following assumption. 
\begin{Inf}\label{inf:mu1_is_true}
    Statement~\ref{j1} is true for~$m$ at the beginning of the voting. It remains true during $[t_0, T]$,
    provided $m$ participates in direct information exchange in accordance with~\ref{j2} so that 
    the significance of the information $\alpha^t_m\assign A^t_m$, intended for sharing, decreases over time (together with the time required for processing this information by the collective intelligence).
    
    
    
\end{Inf}

The main logic leading to~\ref{j2} for~$m$ is contained in the theoretical model 
from Section~\ref{subsec:sr_theory}. 
In the general context under consideration, we also provide certain non\nobreakdash-ri\-go\-ro\-us arguments that     
such behavior is justified.
Let $m$ satisfy~\ref{j1} at time ${t\in [t_0, T]}$, and let $A_m^t\ne\varnothing$. 
Suppose there is a price $q=q_t^+=q_t^-$ that is generated, for example, by an automated market maker and
at which buy and sell orders can be placed at time~$t$.    
This is a deliberate simplification: a real market maker always has a 
bid/ask spread for non-zero orders  
(see also the discussion at the very end of Section~\ref{subsec:market_dyn}). 
Suppose the expert~$m$ considers~$q$  
coherent with the current public information: 
$m$ believes that 
$\hat{\bm\mu}_{T}$ would be equal to 
$q$ 
if he or she remained silent and if, in addition, it turned out that $A_n^t=\varnothing$ for $n\in\bm{e}\setminus\{m\}$.
Assume that 
$$\eta \df {\pi_m^t(\bm\omega)}-{q} 
\ne 0.$$ For definiteness, suppose $\eta>0$.
By virtue of~\eqref{eq:sh_cons_ci}, if $m$ is going to share $\alpha^t_m\assign A^t_m$, then 
\begin{equation}\label{eq:sh_cons_ci_mm}
\pi_m^t(\bm\omega[\hat{\bm\mu}_{T}]) = 
q + \eta.
\end{equation}
By $\hat{\bm\mu}_{T}^{\mathrm{sl}}$ we denote 
the result that $\mathfrak{A}$ will return at $T$ if $m$ remains silent during $[t,T]$. 
Suppose the expert~$m$ anticipates that a part of his or her information is known to others and will be used by them. 
Namely, suppose the following generalization of relation~\eqref{eq:sl_cons} 
holds: 
\begin{equation}\label{eq:sl_cons_ci}
\pi_m^t\bigl(\bm\omega\bigl[\hat{\bm\mu}_{T}^{\mathrm{sl}}\bigr]\bigr) = 
q+ \xi,
\end{equation}
where $\xi \in [0, \eta)$. We may expect that if $m$ did not consider his or her information to be significant or if $m$ believed that other experts know it all, then he or she could set $A_m^t = \varnothing$. This justifies the assumptions $\eta \ne 0$ and $\xi \ne \eta$.

We compare the following \emph{five} alternatives: the expert~$m$ 
\begin{itemize}
    \item does nothing, 
    \item places a buy/sell order at~$q$ and remains silent, 
    \item places
a buy/sell order at~$q$ and publicly generates $\alpha^t_m\assign A^t_m$.
\end{itemize}
Recalling the double relativity principle together with relations~\eqref{eq:Upls} and~\eqref{eq:Umns}, 
and using 
\eqref{eq:sh_cons_ci_mm} and~\eqref{eq:sl_cons_ci}, we can infer that in order to choose the best alternative, we need to select the maximum among the following five values:
$$
1,\quad U^\pm\bigl(q+\xi, q, \lambda_m^t\bigr),\quad\mbox{and}\quad U^\pm\bigl(q+\eta,q,\lambda_m^t\bigr),
$$
where by analogy with Asumption~\ref{M_3}, we suppose $\lambda_m^t < 0$.
Each of these values is not greater than one of the following two of them, with the limits obtained by~\eqref{eq:Upls} and~\eqref{eq:Umns}:
\begin{equation}
\label{eq:limU}
\begin{aligned}
&U^-\bigl(q+\xi, q, \lambda_m^t\bigr)\searrow \psi_q(\xi)\\ 
&U^+\bigl(q+\eta,q,\lambda_m^t\bigr) \searrow \varphi_q(\eta) 
\end{aligned}
\qquad\mbox{as}\quad \bigl|\lambda_m^t\bigr|\searrow 0,
\end{equation}
where $\psi_{q}(\xi) \df 1 - \tfrac{\xi}{1-q} \le 1$ and $\varphi_q(\eta) \df 1 + \tfrac{\eta}{q} > 1$.
In both situations where the significance $|\eta|$ of the information $\alpha^t_m\assign A^t_m$ is sufficiently large or 
the degree~$\bigl|\lambda_m^t\bigr|$ of risk seeking is sufficiently small, the expert~$m$ chooses to vote for $\bm\omega \mapsto 1$ and to share his or her
information. His or her anxiety~$|\xi|$ about others' knowledge enhances the effect.
If $\eta < 0$, then by analogical reasoning, we can draw a similar inference with respect to $\bm\omega \mapsto 0$.

It is interesting to note the following facts. If we interpret $q$ 
as the probability of $\bm\omega[\hat{\bm\mu}_{T}] \mapsto 1$ based on the current public information, then the positive quantity
$$
    \log_2 \varphi_q(\eta) = \log_2 \frac{\pi_m^t(\bm\omega[\hat{\bm\mu}_{T}])}{q} > 0 
$$
can be interpreted exactly as subjective Kharkevich's value of the information 
${\alpha^t_m\assign A^t_m}$ for the expert~$m$ with respect to the goal to make
a profit from a buy order at~$q$. If $\xi > 0$, then the negative quantity
$$
\log_2 \psi_q(\xi) = \log_2 \frac{1-\pi_m^t\bigl(\bm\omega\bigl[\hat{\bm\mu}_{T}^{\mathrm{sl}}\bigr]\bigr)}{1-q} < 0 
$$
is subjective Kharkevich's value of the part of the expert $m$'s information 
known to others,
with respect to the goal to make
a profit from a sell order at~$q$.
The value of information with respect to a certain goal was defined 
by Kharkevich in~\cite{Kh1960}. That definition and its discussion can be found, for example, in the survey~\cite{RoRe2018}.

Further, we still believe that $\bigl|\lambda_m^t\bigr|$ is associated with the expert $m$'s expectations about others' information unknown to him or her, and that this value vanishes together with the dispersion of beliefs (see the discussion following Fact~\ref{rem:gap_dec}). In other words, $\bigl|\lambda_m^t\bigr|$ is associated with the comparative 
ignorance of~$m$ with respect to other experts. In the case of relatively small~$\eta > 0$ and relatively large~$\bigl|\lambda_m^t\bigr|$, it may happen that the value of $U^-$ in~\eqref{eq:limU} 
is greater than the corresponding value of $U^+$. 
On the assumption that $m$ still wants to pursue his or her information advantage, this may imply that $m$ does not consider the present moment suitable for voting at all.  Namely, 
$m$ may choose a sixth alternative: to wait until his or her comparative ignorance, together with $\bigl|\lambda_m^t\bigr|$, decreases. This option does not contradict the above considerations: 
the functions~$U^\pm$ can describe preferences only within the same level~$\lambda_m^t$ of risk seeking,
and the alternatives ``to do nothing'' and ``to wait and act'' are not the same.
If the expert $m$'s comparative ignorance decreases and 
the significance of his or her information does not, then there will come a point when 
placing a buy order together with information sharing will become more attractive than silent voting in the opposite direction. 
The expert~$m$ could find this moment suitable for taking his or her information advantage. 
The waiting time may be long only if the information has little significance and processing it requires little time.
The case $\eta < 0$ can be treated similarly. 
We note that the above considerations are in line with the results of~\cite{FoTv1995}, which show that comparative ignorance may lead to ambiguity aversion behavior. 
Further, we may expect that if $m$ shares his or her information or it leaks out through other experts while $m$ waits, then its significance will decrease and, due to
Inference~\ref{inf:mu1_is_true}, Statement~\ref{j1} will remain true for~$m$. We come to the following inference.

\begin{Inf}
    There is no contradiction in supposing that if the expert~$m$ satisfies Assumption~\ref{asmp:equil_premise}, then \ref{j1} and \ref{j2} will be true for $m$.
\end{Inf}
We make some additional remarks. 
We have suggested that during $[t_0, T]$, the expert~$m$ believes that others believe him or her. 
For this, $m$ have to credibly transmit his or her knowledge. In fact, we may expect that unsupported information or arguments will have little effect 
on~$\hat{\bm\mu}_{t}$. In particular, this implies that opportunities for manipulative behavior are limited. 
Moreover, when a manipulation has an effect on the price, it becomes profitable for everyone (including the manipulator) to reliably negate the manipulative information if possible. If $m$ uses manipulative information to initiate local fluctuations of the price and expects 
that it will eventually be reliably negated, 
then \ref{j1} and \ref{j2} remain true. 
Another reason for non-manipulative behavior and mutual trust among experts could be that, in practice, transmitting manipulative information may involve falsification of scientific research or serious medical consequences.
We also note that there is an extensive literature on manipulation in normal prediction 
markets: see, for example,~\cite{HaOpPo2006,JiSa2012}.





Further, we note that $m$ may have information that he or she believes to be initially shared among substantial part of the experts in~$\bm e$. For example, in the experiment~\cite{AlKiPf2009} described in Section~\ref{subsec:emp_evid}, such information may consist of signals that develop skills in Bayesian inference. Such background knowledge may not be reflected in $\alpha^t_m\assign A^t_m$, but we may expect that it will be aggregated by $\hat{\bm\mu}_{t}$ (as in the regular market) due to
a significant cumulative impact of the knowledgeable experts.  
In other words, we may expect that the collective intelligence will combine two mechanisms: the principle outlined in Section~\ref{subsec:sr_theory} 
will lead to the exchange of critical information between experts, and
the mechanism~\cite{FeFoPe2005} described in Section~\ref{subsec:pr_mr} 
will be responsible for 
accumulating basic background knowledge. To be more specific, we note that in the $xyz$-example described in Section~\ref{subsec:pr_mr} 
and formalized there as Example~\ref{ex:xyz}, 
each piece of information is shared among half of the experts and each expert initially has some guess of the others' knowledge. This looks rather like an extreme example of background knowledge. On the other hand, the $xyz$-example modified 
as Example~\ref{ex:xyz_sr} 
illustrates how few informed experts can share critical information with numerous ignorant agents.
Finally, we come to the following conclusion.
\begin{Con}\label{con:final}
There is no contradiction in supposing that if we publicly offer the experts to share their information and to vote as though orders will be resolved in accordance with~$\bm\omega$, then
the collective intelligence will be $\mu$-efficient. 
In particular, 
we may expect that~\eqref{eq:final_state_M0} will be true. Here, $\hat{\bm\mu}_T$ may be interpreted as a probability of 
$\bm\omega\mapsto 1$ that is generated by the information shared by the experts during $[t_0,T]$ and combined with their cumulative basic background knowledge that determines how they interpret this information. 
\end{Con}
We have achieved the goal of showing the implementability of the system in question.

\section{Further details}\label{sec:details}
Here we return to the consideration of our data-driven model of conventional CDA markets. 
We discuss important details and suggest directions for future research.
\subsection{Concerning market dynamics}\label{subsec:market_dyn}
We begin with the fact that there are sporting outcomes in~\cite{Data2019} that do not give as good a picture as we see in Figure~\ref{fig:mu_sgm_gr}.
First, after the corresponding estimate~$\hat{\bm\mu}_t$ has stabilized, it may be shifted one more time: the announcement of the starting lineups may break condition~\ref{A2_2} inherited by Assumption~\ref{M_0} from Assumption~\ref{asmp:A2}. 
Second, in some cases we can see a picture that is even more severely distorted. The latter is caused by the fact 
that applying the function~$\log L$ defined by~\eqref{eq:logL}, we rely on an additional implicit assumption that the censoring mechanism in~$\log L$ is non-informative.
This assumption actually requires a further detailed analysis:  
its violation may 
cause the corresponding estimate to be biased (see, e.g., the survey~\cite{LeElAf1997}). 
Below we refine our model by relaxing the assumption of non-informativeness
and introducing a new assumption concerning additional information that can be elicited from experts' decisions. 
We show 
that the refined model is in good agreement with the data~\cite{Data2019}, and use it to explain the market dynamics.

First, we present a relaxed version of the aforementioned implicit assumption that the censoring mechanism $\theta^\pm_{\lambda_t}(q)$ in~\eqref{eq:logL} 
adds no information to inequalities~\eqref{eq:theta_plus} and~\eqref{eq:theta_minus}. Instead, we only assume that 
if we know a price an expert has chosen, but do not know whether the corresponding order is a buy or sell order, then we have no additional
information about the random variable that has generated the expert's belief. 
\begin{Asmp}[a refinement of Assumption~\ref{M_1}]\label{M_1_ext}
Let $t \in [t_0,T]$. Fix an arbitrary price $q\in (0,1)$. 
Let $v$ runs over the votes constituting $V^+_t(q)$ and $V^-_t(q)$. 
We assume that regardless of the choice of~$q$, the corresponding 
beliefs~$\pi_v$ 
are i.i.d. 
with distribution $\mathcal{N}_{[0,1]}(\mu,\sigma_t^2)$, where $\mu$ and $\sigma_t$ are the same as in Assumption~\ref{M_1}.
\end{Asmp}
This assumption can be interpreted as a simplified formalization of the following argument. An expert choose a price largely on the basis of the current market state, and after that 
he or she decides, on the basis of his or her belief, whether the order will be a buy or sell order.

The next assumption formalizes the fact that if we know the price an expert has chosen, then his or her choice between a buy and sell order contains additional information about his or her belief. 
\begin{Asmp}[a refinement of Assumption~\ref{M_2}]\label{M_2_ext}
Let $v$ be one of the votes that has been placed by an expert $n\in\bm e$ at time $\tau \in [t_0,T]$ at some price $q\in (0,1)$.
Consider two possible orders that may have generated $v$: 
$$
o^+_\tau \df \bigl(\tfrac{1}{q}, q, n\bigr)^+\quad\mbox{and}\quad o^-_\tau \df \bigl(\tfrac{1}{1-q}, q, n\bigr)^-.
$$
If $n$ has chosen $o^{\pm}_\tau$, then 
\begin{equation}\label{eq:conv_comb}
\rho\bigl(o_\tau^\pm\bigr)\, U^\pm(\pi_v, q, \lambda_n^\tau) + \bigl(1-\rho\bigl(o_\tau^\pm\bigr)\bigr) \ge \rho\bigl(o_\tau^\mp\bigr)\, U^\mp(\pi_v, q, \lambda_n^\tau) + \bigl(1-\rho\bigl(o_\tau^\mp\bigr)\bigr)
\end{equation}
for certain values $\rho\bigl(o_\tau^\pm\bigr) \in [0,1]$.
\end{Asmp}
The meaning of this assumption becomes clear if we recall that 
\begin{itemize}
    \item according to Assumption~\ref{M_1}, each vote has been generated by a separate order;
    \item according to Assumption~\ref{M_2}, each expert believes that each of his or her new orders will be either completely matched or completely unmatched; 
    \item we always have $u(b \mid \lambda, b, L)=1$; 
    \item $U^\pm$ are functions for calculating the expected utility of fully matched orders.
\end{itemize}


If just before $\tau$ we have $S^{\pm}_{\tau-}(q) < S^{\mp}_{\tau-}(q)$, then we should set $\rho\bigl(o_\tau^\pm\bigr) =1$ (because our assumptions imply that if a new order is at least partially matched, then the expert who has placed it has no option but to believe that it will be fully matched). 
In other cases, $\rho\bigl(o_\tau^\pm\bigr)$ may depend on the difference $S^{+}_{\tau-}(q) - S^{-}_{\tau-}(q)$, 
on the remaining time $T-\tau$, on the price~$q$, etc.
We simplify the situation applying the same averaging approach that we have employed to introduce~$\sigma_t$ and~$\lambda_t$: we spread, over each interval $[t_0, t]$, certain invariable 
probabilities of unmatched votes being matched.
\begin{Asmp}[averaging of $\rho$]\label{M_4}
Fix $t \in [t_0,T]$. If $v \in \mathcal{V}_t$, then the corresponding matching probabilities in Assumption~\ref{M_2_ext} can be calculated as follows:  
\begin{equation}\label{eq:rho}
\rho\bigl(o_\tau^\pm\bigr) = 
\begin{cases}
\rhofn_t^\pm(q),& S^{\pm}_{\tau-}(q) \ge S^{\mp}_{\tau-}(q);\\
1,& S^{\pm}_{\tau-}(q) < S^{\mp}_{\tau-}(q),
\end{cases}
\end{equation}
where $\rhofn_t^+$ and $\rhofn_t^-$ are respectively increasing and decreasing functions from~$(0,1)$ to~$(0,1)$.
\end{Asmp}

Suppose we are under the described model with given parameters $\mu$, $\sigma_t$, and $\lambda_t = \lmd(\mu,\sigma_t)$. 
Fix a price $q\in(0,1)$ and suppose we have, just before $\tau \in [t_0,t]$, the corresponding volumes~$V^\pm_{\tau-}(q)$.
Suppose we are going to add to them a new vote~$v$ with $\pi_v$ generated by $\mathcal{N}_{[0,1]}(\mu,\sigma_t^2)$.
If $\pi_v \ge \theta^-_{\lambda_t}(q)$, then we have $$U^-(\pi_v, q, \lambda_t) \le 1 < U^+(\pi_v, q, \lambda_t)$$ 
(see Proposition~\ref{lem:cenz_mech}) and the same is true for the convex combinations of these expressions and~$1$ in inequalities~\eqref{eq:conv_comb}.
Thus, we have to add the vote~$v$ to the votes for $\bm\omega \mapsto 1$ and set ${V^+_\tau(q) = V^+_{\tau-}(q) + 1}$. 
Similarly, if 
$\pi_v\le \theta^+_{\lambda_t}(q)$, 
then we have to add~$v$ to the votes for~$\bm\omega \mapsto 0$
and set ${V^-_\tau(q) = V^-_{\tau-}(q) + 1}$.
Next, if 
\begin{equation}\label{eq:eq_zn}
\pi_v \in \big(\theta^+_{\lambda_t}(q), \theta^-_{\lambda_t}(q)\big),
\end{equation} 
then both $U^\pm(\pi_v, q, \lambda_t)$ are greater than~$1$ and according 
to~\eqref{eq:conv_comb} and~\eqref{eq:rho}
there is an incentive to use $v$ for maintaining the balance between supply and demand. For example, if $S^{\pm}_{\tau-}(q) < S^{\mp}_{\tau-}(q)$ and $\rhofn_t^\mp(q)$ is sufficiently small, then we have to increase~$V^\pm_\tau(q)$.

We refer to intervals~\eqref{eq:eq_zn} as 
\emph{equilibrium zones}.
If $q$ is not too far from~$\mu$ and the probability measure $F_{\mu,\sigma_t}\big(\theta^-_{\lambda_t}(q)\big) - F_{\mu,\sigma_t}\big(\theta^+_{\lambda_t}(q)\big)$ of the corresponding equilibrium zone is sufficiently large, then a situation will arise 
where the Markov chain ${S^{+}_{\tau}(q) - S^{-}_{\tau}(q)}$ does not stray too far from zero and repeatedly vanishes as 
the operational time ${V^{+}_{\tau}(q) + V^{-}_{\tau}(q)}$ increases. A mathematically rigorous description and proof of this effect may be a subject of future research.
Here we demonstrate it with an example by using stochastic simulation. 
In the following calls made in \path{CI.RData} from~\cite{Data2019}, we set $\mu = 0.261$, $\sigma_t = 0.003$, and place $5000$ votes at the price $q = 0.247$ with 
$\rhofn_t^+(q) = 0.01$ and $\rhofn_t^-(q) = 0.999$ in accordance with our model. 
\begin{samepage}
\begin{verbatim}
    sdHist <- GetDemSuppForq(q = 0.247, mu = 0.261, sgm = 0.003, 
                             rhoPls = 0.01, rhoMns = 0.999, 
                             nVotes = 5000, seed = 2022)
    PlotDemSuppDiff(sdHist)   # requires ggplot2
\end{verbatim}
\end{samepage}
\begin{figure}
\includegraphics[scale = 0.6]{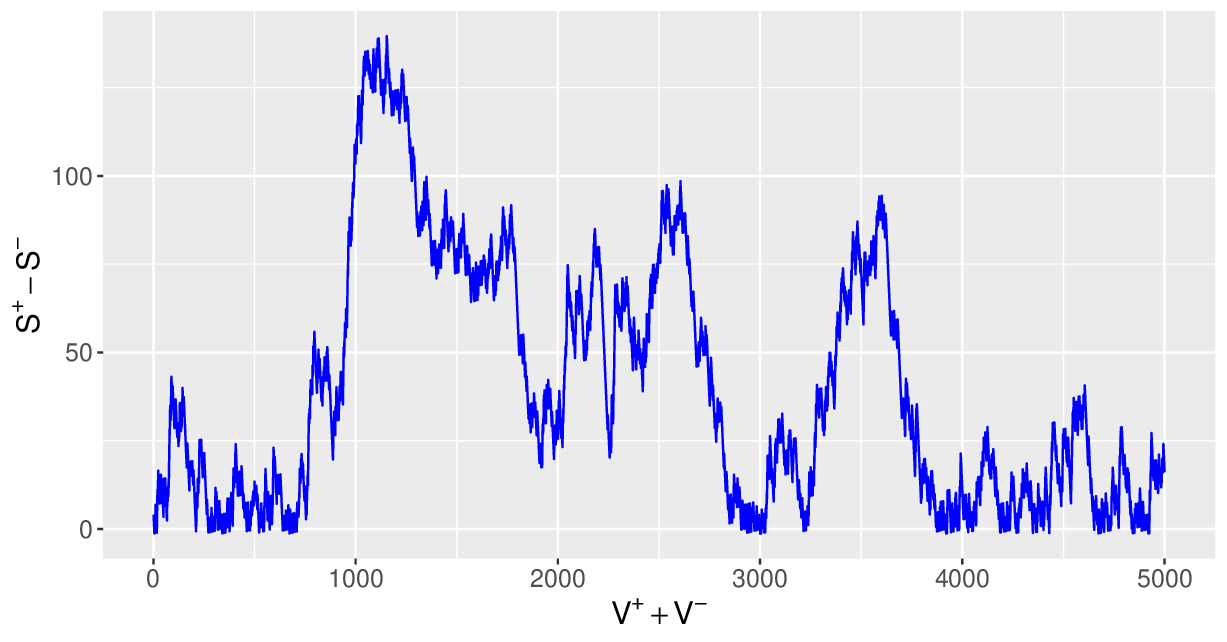}
\caption{Difference between demand and supply.}
\label{fig:supp_dem_diff}
\end{figure}
The result is presented in Figure~\ref{fig:supp_dem_diff}: the non-equilibrium state tends to occur, but cannot persist for long.
This case is borderline: if we slightly increase $\mu-q$ or $\rhofn_t^+(q)$, then the observed tendency toward equilibrium will be immediately broken. On the other hand, the closer~$q$ is 
to~$\mu$, the more freedom we have in choosing~$\rhofn_t^\pm(q)$ so that the tendency toward equilibrium remains.

Thus, we can model a range of prices around~$\mu$ where the differences between demand and supply repeatedly vanish. 
In light of this fact, it is interesting to discuss the dynamics of prices in real CDA mar\-kets. 
We recall that $Q$ is a grid of prices available for trading in such a market. The varying set $[q_\tau^-,q_\tau^+]\cap Q$ consists exactly of the prices that can accept both buy and sell orders, and traders use these prices for offers in the first place. 
When one of the boundary inequalities $S^{\pm}_{\tau}(q_\tau^\pm) < S^{\mp}_{\tau}(q_\tau^\pm)$ turns into equality, the interval $[q_\tau^-,q_\tau^+]$ extends in the corresponding direction. Conversely, this interval narrows when an equality $S^{+}_{\tau}(q) = S^{-}_{\tau}(q)$ breaks for some 
price $q \in (q_\tau^-,q_\tau^+)$, which becomes its new boundary. Thus, the intense dynamics of $q_\tau^\pm$ can be explained by the presence of a range of prices with the above-described tendency toward equilibrium. 

Using stochastic simulation, we now demonstrate that our model can reproduce real data with reasonable accuracy. 
Again, more detailed analysis may be a subject of further research.
We consider the table \verb+data12+ preloaded into \path{CI.RData}, which contains the trading history for a certain 
soccer event. 
Suppose $\bm\omega$ represents
the draw outcome of this event and $t$ is the first moment when the corresponding total volume~${\sum_{q\in Q}\bigl(V^+_t(q) + V^-_t(q)\bigr)}$ 
becomes greater than $15000$ dollars. We choose certain parameters $\mu$ and $\sigma_t$ together with certain functions $\rhofn^\pm_t$.
For each ${q\in Q}$, we generate $\lfloor V^+_t(q) + V^-_t(q) \rfloor$ votes in accordance with our model with the chosen parameters 
and compare the results with the real volumes $V^\pm_t(q)$. 
\begin{figure}
\includegraphics[scale = 0.6]{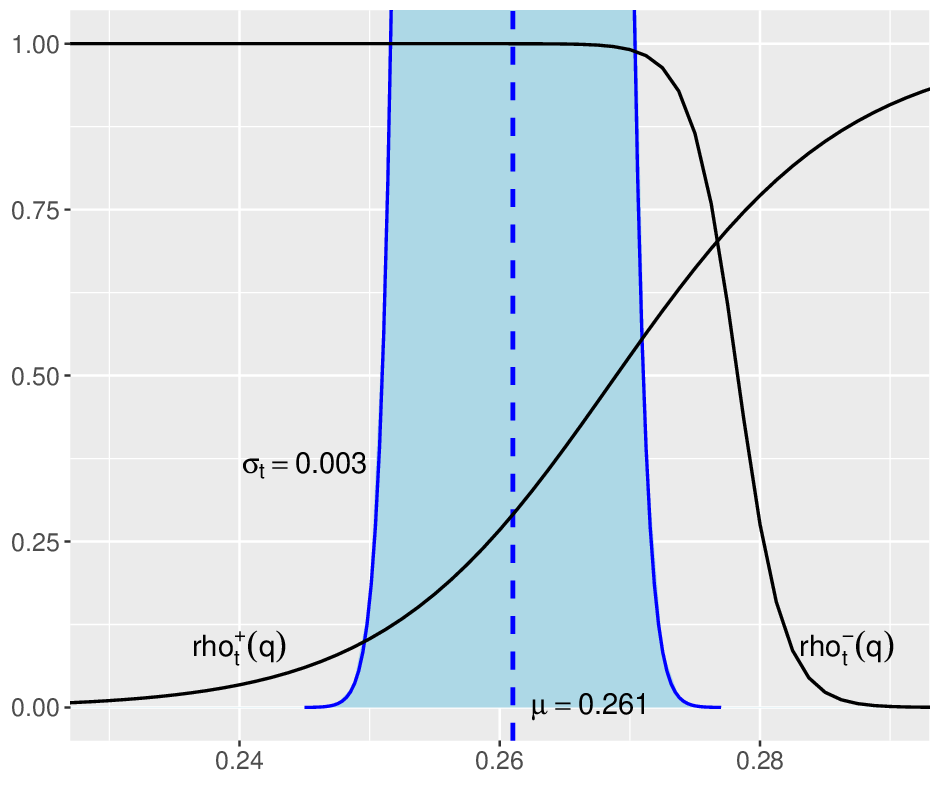}
\caption{Model parameters.}
\label{fig:mu_sgm_rho_12_3}
\end{figure}
The chosen parameters 
are visualized in Figure~\ref{fig:mu_sgm_rho_12_3}: we see a part of the corresponding density function as well as the functions $\rhofn^\pm_t$ that have the form
$$
\rhofn^\pm_t(q) = \expit\bigl(k_t^{\pm}\bigl(\logit(q) - d_t^\pm\bigr)\bigr),\quad k_t^{\pm},d_t^\pm \in \mathbb{R},
$$
where 
$\logit\colon (0,1) \to \mathbb{R}$
and
$\expit\colon \mathbb{R} \to (0,1)$
are the 
standard logit and sigmoid functions defined by the formulas
$$
\logit(q) \df \log\frac{q}{1-q}\quad\mbox{and}\quad \expit(x) \df \frac{1}{1+e^{-x}}.
$$
Concerning the results of generating data, they can be obtained by making the call
\begin{verbatim}
    # reproduces first 15000 votes (dollars) at the draw outcome in
    # UEFA Champions League, 23 July 2013, Hafnarfjordur vs. Ekranas
    ModelExample()
\end{verbatim}
from \path{CI.RData} and
the main body 
of them is presented in Table~\ref{tbl:model_ex}: we see a similarity between the real and generated data, which dramatically increases when we 
smooth out fluctuations by summing some volumes over neighboring prices. 
\begin{table}
\caption{Results of generating data.}
\label{tbl:model_ex}
\begin{tabular}{@{}ccccccccc@{}}
\toprule[1.0pt]
       & \multicolumn{4}{c}{Real data}                                                                               & \multicolumn{4}{c}{Model data}                                                                      \\
$q$    & \multicolumn{2}{c}{$V^+$}                            & \multicolumn{2}{c}{$V^-$}                            & \multicolumn{2}{c}{$V^+$}                        & \multicolumn{2}{c}{$V^-$}                        \\ \cmidrule(r){1-1}\cmidrule(lr){2-5}\cmidrule(l){6-9}
0.2222 & \multicolumn{2}{c}{13.4}                             & \multicolumn{2}{c}{0.9}                              & \multicolumn{2}{c}{14}                           & \multicolumn{2}{c}{0}                            \\
0.2326 & \multicolumn{2}{c}{243.7}                            & \multicolumn{2}{c}{0.0}                              & \multicolumn{2}{c}{243}                          & \multicolumn{2}{c}{0}                            \\
0.2381 & \multicolumn{2}{c}{50.1}                             & \multicolumn{2}{c}{0.0}                              & \multicolumn{2}{c}{50}                           & \multicolumn{2}{c}{0}                            \\
0.2439 & \multicolumn{2}{c}{282.2}                            & \multicolumn{2}{c}{0.0}                              & \multicolumn{2}{c}{268}                          & \multicolumn{2}{c}{14}                           \\ \cmidrule(lr){2-3}\cmidrule(lr){4-5}\cmidrule(lr){6-7}\cmidrule(l){8-9} 
0.2500 & 571.8  & \multirow{2}{*}{\makecell{$\sum=$\\1146.0}}                     & 280.3  & \multirow{2}{*}{\makecell{$\sum=$\\541.1}}                      & 638  & \multirow{2}{*}{\makecell{$\sum=$\\1142}}                     & 214  & \multirow{2}{*}{\makecell{$\sum=$\\545}}                      \\
0.2532 & 574.2  &                                             & 260.9  &                                             & 504  &                                           & 331  &                                           \\ \cmidrule(lr){2-3}\cmidrule(lr){4-5}\cmidrule(lr){6-7}\cmidrule(l){8-9} 
0.2564 & 1075.9 & \multicolumn{1}{l}{\multirow{4}{*}{\makecell{$\sum=$\\3385.1}}} & 1065.3 & \multicolumn{1}{l}{\multirow{4}{*}{\makecell{$\sum=$\\ 6733.5}}} & 1013 & \multicolumn{1}{l}{\multirow{4}{*}{\makecell{$\sum=$\\3407}}} & 1128 & \multicolumn{1}{l}{\multirow{4}{*}{\makecell{$\sum=$\\6710}}} \\
0.2597 & 1236.4 & \multicolumn{1}{l}{}                        & 2389.4 & \multicolumn{1}{l}{}                        & 1240 & \multicolumn{1}{l}{}                      & 2385 & \multicolumn{1}{l}{}                      \\
0.2632 & 801.4  & \multicolumn{1}{l}{}                        & 2300.0 & \multicolumn{1}{l}{}                        & 819  & \multicolumn{1}{l}{}                      & 2282 & \multicolumn{1}{l}{}                      \\
0.2667 & 271.5  & \multicolumn{1}{l}{}                        & 978.7  & \multicolumn{1}{l}{}                        & 335  & \multicolumn{1}{l}{}                      & 915  & \multicolumn{1}{l}{}                      \\ \cmidrule(lr){2-3}\cmidrule(lr){4-5}\cmidrule(lr){6-7}\cmidrule(l){8-9} 
0.2703 & 75.7   & \multirow{4}{*}{\makecell{$\sum=$\\245.6}}                      & 434.3  & \multirow{4}{*}{\makecell{$\sum=$\\1903.6}}                     & 137  & \multirow{4}{*}{\makecell{$\sum=$\\247}}                      & 373  & \multirow{4}{*}{\makecell{$\sum=$\\1901}}                     \\
0.2740 & 86.6   &                                             & 453.0  &                                             & 43   &                                           & 496  &                                           \\
0.2778 & 0.0    &                                             & 622.5  &                                             & 25   &                                           & 597  &                                           \\
0.2817 & 83.4   &                                             & 393.7  &                                             & 42   &                                           & 435  &                                           \\ \cmidrule(lr){2-3}\cmidrule(lr){4-5}\cmidrule(lr){6-7}\cmidrule(l){8-9} 
0.2857 & \multicolumn{2}{c}{0.0}                              & \multicolumn{2}{c}{13.2}                             & \multicolumn{2}{c}{0.0}                          & \multicolumn{2}{c}{13}                           \\
0.2985 & \multicolumn{2}{c}{0.0}                              & \multicolumn{2}{c}{360.8}                            & \multicolumn{2}{c}{0.0}                          & \multicolumn{2}{c}{360}                          \\
0.3125 & \multicolumn{2}{c}{0.0}                              & \multicolumn{2}{c}{337.8}                            & \multicolumn{2}{c}{0.0}                          & \multicolumn{2}{c}{337}                          \\
0.3226 & \multicolumn{2}{c}{2.5}                              & \multicolumn{2}{c}{5.3}                              & \multicolumn{2}{c}{0.0}                          & \multicolumn{2}{c}{7}                            \\
0.3279 & \multicolumn{2}{c}{0.0}                              & \multicolumn{2}{c}{69.1}                             & \multicolumn{2}{c}{0.0}                          & \multicolumn{2}{c}{69}                           \\ \bottomrule[1.0pt]
\end{tabular}
\end{table}

We note that the parameters presented in Figure~\ref{fig:mu_sgm_rho_12_3} have been adjusted manually. The algorithm we apply in Section~\ref{subsec:mod_meets_dat} and
describe in Section~\ref{subsec:econ_equil} does not take the additional model assumptions into account, and this may have a significant impact on the results, provided transaction prices are far from~$\tfrac{1}{2}$. In future research we need to adapt the algorithm in accordance with the extended model. This may be easier in the setting where prices are provided by an automated market maker and matching probabilities equal~$1$. 
More specifically, consider the self-re\-solv\-ing scenario 
and suppose there is an automated market maker that provides, at each time $t\in [t_0,T]$, two prices $q^+_t$ and $q^-_t$ for placing, with immediate matching, one vote for or against $\bm\omega \mapsto 1$, respectively.
We may expect that the greater an expert's information advantage is, the sooner taking this advantage will become more preferable for him or her than waiting for 
the reduction of his or her comparative ignorance. This means that the significance of the information being shared will decrease with time.
Thus, we may expect $\pi_v$ to exhibit damped oscillations 
in the manner similar to what we have observed in conventional prediction markets. This leads us to believe that we can rely on almost the same model assumptions. 
Namely, we can 
stay with the normality assumption~\ref{M_1} 
and describe, in the same manner as in Assumptions~\ref{M_2} and~\ref{M_2_ext}, the choice between
\begin{itemize}
    \item inaction,
    \item placing one vote for $\bm\omega \mapsto 1$ at $q^+_t$,
    \item placing one vote for $\bm\omega \mapsto 0$ at $q^-_t$,
\end{itemize}
while both matching probabilities 
are equal to~$1$. 
It is also possible to adapt the function $\log L$ accordingly. 
Concerning the market maker implementation, we refer to the article~\cite{OtSa2012} which describes a customizable and dynamically controllable market maker. 
We believe that we can customize such a market maker so that the estimate~$\hat{\bm\mu}_{t}$ will quickly converge to the parameter~$\mu$ within our model
(in particular, it will be possible to verify this by using stochastic simulation). This will lead us to the implementation of the collective intelligence described in Section~\ref{subsec:sr_reality}.

\subsection{Concerning economic equilibrium}\label{subsec:econ_equil} 
In the papers~\cite{Ma2006,Gje2004,WoZi2006,OtSo2007}, the authors compare the equilibrium price and a central tendency of the distribution 
of traders' beliefs in a prediction market. We also provide an equation that postulates that~$\mu$ is the equilibrium price in a certain strong sense.
This equation produces the function~$\lmd(\cdot,\cdot)$ being utilized in all our calculations. 

In the previous section, we see that our model gives a range of equilibrium prices where the differences ${S^{+}_{\tau}(q) - S^{-}_{\tau}(q)}$ exhibit bounded oscillations near~$0$, or, what is the same, the ratios
${S^{+}_{\tau}(q)}/{S^{-}_{\tau}(q)}$ exhibit damped oscillations near~$1$. The latter may be interpreted to mean that for each such $q$ the frequencies of 
votes for $\bm\omega\mapsto 1$ and votes for $\bm\omega\mapsto 0$ are in the ratio of $q$ to $1-q$. We now choose the ``most equilibrium'' price and postulate that it coincides with~$\mu$. Namely, we write an equation that can be interpreted as follows. For 
the price~$\mu$, 
the frequencies of votes that can \emph{only} be for $\bm\omega\mapsto 1$,
and votes that can \emph{only} be for $\bm\omega\mapsto 0$,
have to be in the ratio of $\mu$ to $1-\mu$.
In accordance with Assumptions~\ref{M_2}, \ref{M_2_ext}, and Proposition~\ref{lem:cenz_mech}, $v$~cannot be placed at a price~$q$ as a vote for $\bm\omega\mapsto 1$ iff 
$\pi_v \le \theta^+_{\lambda_n^\tau}(q)$, and it cannot be vote for $\bm\omega\mapsto 0$ 
iff $\pi_v \ge \theta^-_{\lambda_n^\tau}(q)$.
In the following proposition, we assume the existence of a common parameter~$\lambda < 0$, 
write our strong equilibrium equation, and derive 
the function $\lmd(\cdot,\cdot)$ from it.
\begin{Le}
Consider the function
$$
R(\lambda \mid \mu, \sigma) \df 
\frac{1-\mu}{\mu}\,
\frac{1-F_{\mu,\sigma}\bigl(\theta^-_{\lambda}(\mu)\bigr)}{F_{\mu,\sigma}\bigl(\theta^+_{\lambda}(\mu)\bigr)},
$$
where $\lambda < 0$, $0<\mu<1$, and $\sigma > 0$.
    Fix $\mu \ne \frac{1}{2}$. There exists $d_\mu > 0$ such that for any fixed $\sigma \in (0, d_\mu)$, the equation 
        \begin{equation}\label{eq:log_R}
        \log R(\lambda\mid \mu, \sigma) = 0
        \end{equation}
        has a unique negative solution, which we denote by $\lmd(\mu,\sigma)$. It satisfies~\eqref{eq:lmd_prop}.
\end{Le}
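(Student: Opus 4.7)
The plan is to combine an intermediate-value-theorem argument with a Mills-ratio expansion of the two Gaussian tails entering $R$. First I would extend $R(\cdot\mid\mu,\sigma)$ continuously to $[-\infty,0]$: since $\theta^\pm_\lambda(\mu)\to\mu$ as $\lambda\to 0^-$ and $F_{\mu,\sigma}(\mu)=1/2$, one has $R(0^-\mid\mu,\sigma) = (1-\mu)/\mu$; and since $\theta^+_\lambda(\mu)\to 0^+$, $\theta^-_\lambda(\mu)\to 1^-$ as $\lambda\to-\infty$,
$$R(-\infty\mid\mu,\sigma) \;=\; \frac{1-\mu}{\mu}\cdot\frac{\Phi((\mu-1)/\sigma)}{\Phi(-\mu/\sigma)},$$
which by the Mills-ratio asymptotic $\Phi(-x)=\phi(x)(1+O(x^{-2}))/x$ grows like $\exp((2\mu-1)/(2\sigma^2))$ as $\sigma\to 0^+$. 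For $\mu>1/2$ (resp.\ $\mu<1/2$), $R$ moves continuously from $(1-\mu)/\mu<1$ (resp.\ $>1$) at $\lambda=0^-$ to $+\infty$ (resp.\ $0$) at $\lambda=-\infty$ once $\sigma$ is below some threshold, which I take as $d_\mu$. Existence of a negative root of $\log R$ then follows from the IVT.

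Next I would Taylor-expand at $\lambda=0$,
$$\theta^\pm_\lambda(\mu) \;=\; \mu \mp \tfrac{\lambda\mu(1-\mu)}{2} + \tfrac{\lambda^2\mu(1-\mu)(1-2\mu)}{12} + O(\lambda^3),$$
and set $\alpha = (\mu-\theta^+_\lambda(\mu))/\sigma$, $\beta = (\theta^-_\lambda(\mu)-\mu)/\sigma$, both positive for $\lambda<0$. A second application of the Mills-ratio expansion then gives
$$\log R(\lambda\mid\mu,\sigma) \;=\; \log\frac{1-\mu}{\mu} + \frac{\alpha^2-\beta^2}{2} + \log\frac{\alpha}{\beta} + O\bigl(\max(\alpha,\beta)^{-2}\bigr),$$
in which the dominant $\lambda$-dependent term is
$$\frac{\alpha^2-\beta^2}{2} \;=\; \frac{|\lambda|^3\mu^2(1-\mu)^2(2\mu-1)}{12\sigma^2} + O(|\lambda|^4/\sigma^2).$$
Rescaling $|\lambda|=s\sigma^{2/3}$ reduces $\log R=0$ at leading order to the scalar cubic
$$\frac{s^3\mu^2(1-\mu)^2(2\mu-1)}{12} \;=\; \log\frac{\mu}{1-\mu},$$
whose right-hand side shares the sign of $2\mu-1$ and which therefore admits a unique positive root $s^\star(\mu)$.

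Uniqueness of $\lmd(\mu,\sigma)$ will then follow by a trichotomy on the scale of $|\lambda|$: when $|\lambda|\ll\sigma^{2/3}$ the correction is $o(1)$ and $\log R\approx\log\frac{1-\mu}{\mu}\ne 0$; when $|\lambda|\gg\sigma^{2/3}$ the sign-definite dominant term overwhelms and $\log R$ stays far from $0$; on the scale $|\lambda|\sim\sigma^{2/3}$ the reduced equation in $s$ is strictly monotone, so within any error tolerance permitted by $d_\mu$ the original equation inherits a unique root. The same analysis gives $|\lmd(\mu,\sigma)|\sim s^\star(\mu)\sigma^{2/3}\to 0$, and differentiating the leading-order identity $|\lmd|^3\sim C_\mu\sigma^2$ implicitly yields $\partial_\sigma|\lmd(\mu,\sigma)|>0$, hence the arrow $\searrow$ in~\eqref{eq:lmd_prop}.

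The main obstacle is the uniform control of the error terms in the Mills-ratio expansion across all three regions of the trichotomy. Concretely, on the scale $|\lambda|\sim\sigma^{2/3}$ one has $\alpha,\beta\sim\sigma^{-1/3}$, so the $O(x^{-2})$ Mills correction is $O(\sigma^{2/3})$, comfortably dominated by the $O(1)$ leading constant $\log\frac{\mu}{1-\mu}$; for $|\lambda|$ bounded away from zero one must additionally verify that $(\mu-\theta^+_\lambda(\mu))^2-(\theta^-_\lambda(\mu)-\mu)^2$ keeps the sign of $2\mu-1$ throughout $(-\infty,0)$, which can be checked by direct algebra on the hyperbolic representation of $\theta^\pm_\lambda(\mu)$ using that the quantity vanishes precisely at $\lambda=0$ and at $\mu=1/2$. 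These two ingredients, together with log-concavity of the Gaussian density to pin down the sign of $\partial_\lambda\log R$ near the root, convert the asymptotic picture into the rigorous conclusion once $d_\mu$ is chosen small enough.
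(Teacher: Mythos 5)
Your route is genuinely different from the paper's. The paper proves that $\lambda\mapsto\log R(\lambda\mid\mu,\sigma)$ is strictly increasing on the whole half-line $(-\infty,0)$: writing $G_{\mu,\sigma}=f_{\mu,\sigma}/F_{\mu,\sigma}$ and using the symmetry $1-F_{\mu,\sigma}\bigl(\theta^-_{\lambda}(\mu)\bigr)=F_{\mu,\sigma}\bigl(2\mu-\theta^-_{\lambda}(\mu)\bigr)$, it reduces $\partial_\lambda\log R>0$ to two elementary facts — that $G_{\mu,\sigma}$ is decreasing (log-concavity of the Gaussian) and that $|\partial_\lambda\theta^+_{\lambda}(\mu)|<|\partial_\lambda\theta^-_{\lambda}(\mu)|$ for $\mu<\tfrac12$, whence $\theta^-_{\lambda}(\mu)-\mu>\mu-\theta^+_{\lambda}(\mu)$. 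Uniqueness is then global and free of asymptotic bookkeeping; existence uses the same endpoint comparison as yours; and the monotone half of~\eqref{eq:lmd_prop} follows because $R$ is also increasing in $\sigma$ (the tail ratio $\bigl(1-F_{\mu,\sigma}(b)\bigr)/F_{\mu,\sigma}(a)$ with $b-\mu>\mu-a$ decreases to $0$ with $\sigma$), so the crossing point moves monotonically. Your Mills-ratio rescaling buys something the paper does not state — the rate $|\lmd(\mu,\sigma)|\sim s^\star(\mu)\,\sigma^{2/3}$ — but it pays for it with delicate uniform error control.

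Two steps need repair. First, in the critical region $|\lambda|\asymp\sigma^{2/3}$ you argue that because the reduced equation in $s$ is strictly monotone and the error is $o(1)$, the original equation ``inherits a unique root''. Uniform closeness to a strictly monotone function does not by itself preclude multiple roots: a perturbation of small amplitude can have large derivative and create extra crossings. You need a bound on the $\lambda$-derivative of the error, or — more directly — the sign of $\partial_\lambda\log R$ on that interval, which is exactly the log-concavity computation you mention only in passing; once carried out it holds on all of $(-\infty,0)$ and makes the trichotomy unnecessary. Second, by the convention of Section~\ref{subsec:notation}, relation~\eqref{eq:lmd_prop} asserts that $|\lmd(\mu,\sigma)|$ is an \emph{increasing function} of $\sigma$, not merely that it tends to $0$; implicit differentiation of the leading-order identity $|\lmd|^3\sim C_\mu\sigma^2$ yields monotonicity only modulo unquantified error terms and so does not establish it for every $\sigma\in(0,d_\mu)$. (A small slip besides: the first-order term of $\theta^+_{\lambda}(\mu)$ is $+\lambda\mu(1-\mu)/2$, not $-\lambda\mu(1-\mu)/2$; your subsequent claim that $\alpha>0$ for $\lambda<0$ is consistent with the correct sign.)
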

\begin{proof}
We suppose $\mu<\tfrac{1}{2}$. The considerations for $\mu>\tfrac{1}{2}$ are symmetrical.
Denote $G_{\mu,\sigma}(x)\df \frac{f_{\mu,\sigma}(x)}{F_{\mu,\sigma}(x)}$. 
The proofs of the following three relations are simple and we omit them:
    \begin{gather}\label{eq:d_G}
    \partial_x G_{\mu,\sigma}(x)<0;\\[8pt]\label{eq:d_theta}
    |\partial_\lambda\theta^+_{\lambda}(\mu)|<|\partial_\lambda\theta^-_{\lambda}(\mu)|\quad\mbox{for}\quad\mu<\tfrac{1}{2};
    \\[8pt]\label{eq:lim_W}
    a<\mu< b,\quad b-\mu>\mu-a \quad\Longrightarrow\quad
    \frac{1-F_{\mu,\sigma}(b)}{F_{\mu,\sigma}(a)} \searrow 0\quad\mbox{as}\quad \sigma \searrow 0.
    \end{gather}
Since
$$
    R(0\mid \mu, \sigma) = \frac{1-\mu}{\mu} > 1\quad\mbox{and}\quad 
    \lim_{\lambda \to -\infty} R(\lambda\mid \mu, \sigma) = 
    \frac{1-\mu}{\mu}\,\frac{1-F_{\mu,\sigma}(1)}{F_{\mu,\sigma}(0)}, 
$$
relation~\eqref{eq:lim_W} implies that equation~\eqref{eq:log_R} 
has at least one solution, provided $\sigma$ is not too large.
Relation~\eqref{eq:d_theta} implies that 
\begin{equation}\label{eq:tht_mu}
     {\theta}^-_{\lambda}(\mu) -\mu >  \mu - \theta^+_{\lambda}(\mu).
\end{equation}
Differentiating~$\log R$ and combining~\eqref{eq:d_G}, \eqref{eq:d_theta}, and~\eqref{eq:tht_mu}, we obtain
$$
\partial_\lambda \log R(\lambda\mid\mu,\sigma) 
= -\partial_\lambda\theta^+_{\lambda}(\mu)\,G_{\mu,\sigma}\bigl(\theta^+_{\lambda}(\mu)\bigr) -
\partial_\lambda\theta^-_{\lambda}(\mu)\,G_{\mu,\sigma}\bigl(2\mu - {\theta}^-_{\lambda}(\mu)\bigr) > 0.
$$
Thus, a solution of~\eqref{eq:log_R} is unique.
Finally, \eqref{eq:lim_W} and~\eqref{eq:tht_mu} imply~\eqref{eq:lmd_prop}. 
\end{proof}
Analyzing the functions $\log R$ and $\lmd(\cdot,\cdot)$ numerically and graphically (without rigorous analytical proofs), we may additionally infer
that
\begin{itemize}
    \item the closer $\mu$ is to $\frac{1}{2}$, the greater $d_\mu$ can be chosen;
    \item at least we have $d_\mu> \min(\mu,1-\mu)$;
    \item there exists a finite limit
    $$
        \lim_{\mu\to\frac{1}{2}} \lmd(\mu,\sigma) < 0,
    $$
    which can be taken as $\lmd\big(\tfrac{1}{2},\sigma\big)$.
\end{itemize}

We also note that in Assumption~\ref{M_2} we describe only how the experts compare trading actions with the inaction, and in fact it suffices in order for equation~\eqref{eq:log_R} to be meaningful. In Assumption~\ref{M_2_ext}, we additionally postulate how the experts compare buying and selling, and this
more detailed (but more restrictive) model allows us to consider an alternative equilibrium equation that can be interpreted as follows. If the price~$\mu$ accepts all orders with immediate matching, then the frequencies of votes for $\bm\omega\mapsto 1$ and votes for $\bm\omega\mapsto 0$ will be in the ratio of $\mu$ to $1-\mu$. 
Namely, we can consider the equation
\begin{equation}\label{eq:alt_eq_eq}
\frac{1-\mu}{\mu}\,
\frac{1-F_{\mu,\sigma}(\pi_{\mu,\lambda})}{F_{\mu,\sigma}(\pi_{\mu,\lambda})} = 1,
\end{equation}
where $\pi_{\mu,\lambda}$ is a unique solution of the equation
$$
U^+(\pi,\mu,\lambda) = U^-(\pi,\mu,\lambda).
$$
The analysis of equation~\eqref{eq:alt_eq_eq} is another subject of future research.

Now we discuss the details of the algorithm we apply in Section~\ref{subsec:mod_meets_dat}.
We define
\begin{equation}\label{eq:arg_max_logL}
    \bigl(\hat\mu(\lambda), \hat\sigma(\lambda)\bigr) \df \argmax_{(\mu,\sigma)} \log L(\mu,\sigma \mid\lambda)
\end{equation}
and consider the equation  
\begin{equation}\label{eq:outer_eq}
\lambda = \lmd\bigl(\hat\mu(\lambda), \hat\sigma(\lambda)\bigr),\quad\lambda < 0.
\end{equation}
We can observe that when we increase $|\lambda|$, the value $\hat\sigma(\lambda)$ decreases together with the absolute value of the right side of~\eqref{eq:outer_eq}. 
Thus, the left and right sides of equation~\eqref{eq:outer_eq} vary in different directions and its solution~$\hat\lambda_t$ is a balance point where they coincide.
Finally, we set
$$
\hat{\bm\mu}_t = \hat\mu\bigl(\hat\lambda_t\bigr)\quad\mbox{and}\quad\hat{\sigma}_t = \hat\sigma\bigl(\hat\lambda_t\bigr).
$$
In order to compute $\lmd(\mu,\sigma)$ as a solution of~\eqref{eq:log_R}, 
to compute~\eqref{eq:arg_max_logL}, and to solve~\eqref{eq:outer_eq}, we use Nelder--Mead method~\cite{NeMe1965} of optimization. 

\subsection{Concerning utilities}\label{subsec:util} 
We now describe how the double relativity principle has been deduced from the data.
This provides further justification for our choice of utility. 

First, we have begun with the assumption that
each expert in $\bm e$ places 
$b_n$ votes by means of a single order, where 
$b_n = b$ are the same for all the experts, and without loss of generality we have set 
$b=1$.
Second, we have assumed that having been completely matched, each such order is at least better than inaction in terms of the standard exponential utility 
\begin{equation}\label{eq:exp_u}
u(x\mid\alpha) = -\frac{e^{-\alpha x}}{\alpha}.
\end{equation}
Third, we have assumed that the voters for $\bm\omega\mapsto 1$ have one and the same risk attitude $\alpha = \alpha^+$, 
while the voters for $\bm\omega\mapsto 0$ have $\alpha = \alpha^-$.

Under those preliminary assumptions, we have considered our data for 
various sporting outcomes~$\bm\omega$.
We have deduced that 
for each $t$ from the corresponding observation 
interval $[t_1, t_2]$ and for each $q\in Q$, we have $V^+_t(q)$ inequalities
\begin{equation}\label{eq:pi_exp_pls}
\pi_v > \frac{1-\exp\bigl({-\alpha^+}\bigr)}{1 - \exp\Bigl({-\tfrac{\alpha^+}{q}}\Bigr)}
\end{equation}
and $V^-_t(q)$ inequalities
\begin{equation}\label{eq:pi_exp_mns}
\pi_v < \frac{1 - \exp\frac{\alpha^- q}{1-q}}{1 - \exp\frac{\alpha^-}{1-q}}.
\end{equation}
Assuming the normality of~$\pi_v$ and applying the log-likelihood function with the censoring mechanism given by~\eqref{eq:pi_exp_pls} and~\eqref{eq:pi_exp_mns}, we have obtained
estimates $\hat\mu_t(\alpha^+,\alpha^-)$ related to various~$\bm\omega$. 
We have observed that in order for $\hat\mu_{t_2}(\alpha^+,\alpha^-)$ to look like the central value of fluctuations of $q^c_t$, we need to 
choose $\alpha^\pm$ such that $\alpha^\pm < 0$ and 
$\frac{\alpha^+}{\alpha^-}\approx \frac{q^c_{t_2}}{1-q^c_{t_2}}$.
But the prices $q \in Q$ with major volumes $V^\pm_t(q)$ are similar to each other, and $q^c_{t_2}$ is among them and
may be considered as their rough average. This has prompted us to substitute 
\begin{equation}\label{eq:alpha_subst}
\alpha^+ = \lambda q\quad\mbox{and}\quad\alpha^- = \lambda (1-q),
\end{equation} 
where $\lambda$ is a negative value. We note at once that such substitution transforms the expressions 
in~\eqref{eq:pi_exp_pls} and~\eqref{eq:pi_exp_mns} to $\theta^+_{\lambda}(q)$ and $\theta^-_{\lambda}(q)$, respectively.

We can explicitly derive the double relativity principle from~\eqref{eq:alpha_subst}. Let $\mathcal{L}$ be the set of all lotteries. 
For a lottery~$L$ generated by a matched vote, we have 
either $M_L-m_L = \tfrac{1}{q}$ if the vote is for $\bm\omega \mapsto 1$, or 
$M_L-m_L = \tfrac{1}{1-q}$ if the vote is for $\bm\omega \mapsto 0$.
Taking this into account, we can see that comparing a matched order with the inaction by means of function~\eqref{eq:exp_u} with parameters~\eqref{eq:alpha_subst} fits into a scheme where
an individual with a budget~$b$ compares lotteries 
\emph{within} each of the sets
$$
    \mathcal{L}(m, M) = \{L \in \mathcal{L} \mid m_L = m,\;M_L = M\} \cup \{L_b\},\quad m<b<M,
$$
by means of the corresponding utility function
$$
    u(x\mid \lambda, m, M) = -\frac{\exp \bigl(-\lambda\frac{x-m}{M-m}\bigr)}{\lambda},
	\quad \lambda \in \mathbb{R},\quad x\in [m, M].
$$
Thus, we have derived the first relativity from the data. This suffices to describe how an individual compares lotteries with his or her inaction~$L_b$, and to obtain the formulas for $\theta_\lambda^\pm(q)$ together with the log-likelihood function~\eqref{eq:logL} and equation~\eqref{eq:log_R}.

We now note that the inaction~$L_b$ is a single element of the intersection of the sets $\mathcal{L}(m, M)$, $m<b<M$, and its expected utility takes different values
$u(b\mid \lambda, m, M)$ in different sets $\mathcal{L}(m, M)$. In order to solve this problem, we have applied affine transformations to the functions $u(\cdot\mid \lambda, m, M)$ so that they 
become equal at~$b$ and the order within each set $\mathcal{L}(m, M)$ remains the same. This has led us to the second relativity and to the unified utility function~\eqref{eq:drp}, 
which
is increasing for any $\lambda\in\mathbb{R}$ and describes how an individual compares any two lotteries in
$$
\mathcal{L}_b \df\!\!\! \bigcup_{m<b<M}\!\!\! \mathcal{L}(m, M).
$$
These lotteries may be from different sets $\mathcal{L}(m, M)$ and, in particular, we have got the opportunity 
to formulate the refined assumption~\ref{M_2_ext} (together with equation~\eqref{eq:alt_eq_eq} which needs further investigation).
Thus, the second relativity, which has a theoretical origin, is indirectly supported 
by the results of Section~\ref{subsec:market_dyn} (see Table~\ref{tbl:model_ex}). 

Finally, we note that function $u(x\mid \lambda, b, L)$ 
satisfies the von Neu\-mann--Mor\-gen\-stern axioms within 
each of the convex sets $\mathcal{L}(m, M)$, $m<b<M$. A complete axiomatization of the double relativity principle may be a subject of future research.
In this regard, we note that the theory built in~\cite{Co1992} has certain similarities with our constructions: it takes into account \emph{security factors}~$m_L$ and \emph{potential
factors}~$M_L$ of lotteries $L\in\mathcal{L}$. Nevertheless, our function $u(x\mid \lambda, b, L)$ 
does not fit into the axioms of~\cite{Co1992}, which, in particular, do not highlight the role of the status quo lottery~$L_b$. On the other hand, the notion of \emph{the status quo bias} is also studied in the literature
(see, e.g., \cite{SaZe1988,YuOk2005}). It may be interesting to establish how our constructions are connected with those studies as well as with the prospect theory~\cite{KaTv1979}.

\end{document}